\documentclass[12pt]{article}

\usepackage[utf8]{inputenc}
\usepackage[english]{babel}
\usepackage[margin=1in]{geometry}
\usepackage[dvipsnames,table]{xcolor}
\usepackage{amssymb,amsmath,amsthm,setspace,graphicx}
\usepackage{booktabs}
\usepackage{multirow}
\usepackage{rotating}
\usepackage{array}
\usepackage{siunitx}
\usepackage[justification=centering]{caption}
\usepackage[authoryear]{natbib}
\usepackage{url}
\usepackage{float}
\usepackage[section]{placeins}
\usepackage{mathtools}
\usepackage{enumitem}
\usepackage[normalem]{ulem}
\usepackage{algorithm}
\usepackage{algpseudocode}
\usepackage{pdfpages}
\usepackage[pdfborder={0 0 0},final=true,colorlinks=true,breaklinks=true,linkcolor=red,citecolor=blue]{hyperref}
\usepackage[capitalize,nameinlink,noabbrev]{cleveref}
\usepackage{relsize}
\usepackage{setspace}

\renewenvironment{abstract}{%
	\begin{center}%
		\textbf{\abstractname}%
		\vspace{0.5\baselineskip}%
	\end{center}%
	\begin{singlespace}%
		\noindent\ignorespaces%
	}{%
	\end{singlespace}%
}

\numberwithin{equation}{section}
\newcommand{\tab}{\hspace*{2em}}

\DeclareUnicodeCharacter{2212}{\ensuremath{-}}

\theoremstyle{plain}
\newtheorem{thm}{Theorem}[section]
\crefname{thm}{theorem}{theorems}
\Crefname{thm}{Theorem}{Theorems}
\newtheorem{cor}{Corollary}
\crefname{cor}{corollary}{corollaries}
\Crefname{cor}{Corollary}{Corollaries}
\newtheorem{prop}{Proposition}
\Crefname{prop}{Proposition}{Propositions}
\newtheorem{lem}{Lemma}
\crefname{lem}{lemma}{lemmas}
\Crefname{lem}{Lemma}{Lemmas}
\theoremstyle{definition}
\newtheorem{axiom}{Axiom}[section]
\crefname{axiom}{axiom}{axioms}
\Crefname{axiom}{Axiom}{Axioms}
\newtheorem{defn}{Definition}[section]
\crefname{defn}{definition}{definitions}
\Crefname{defn}{Definition}{Definitions}
\theoremstyle{remark}
\newtheorem{rem}{Remark}[section]
\crefname{rem}{remark}{remarks}
\Crefname{rem}{Remark}{Remarks}

\definecolor{lightgrey}{rgb}{0.83,0.83,0.83}

\begin{document}
	
	% --- TITLE PAGE: NO PAGE NUMBER, SINGLE-SPACED ---
	\pagenumbering{gobble}
	
	\begin{singlespace}
		
		\title{\vspace{-4ex}\textbf{A Theory of Reference-Dependent Utility with Dispersion and Attenuation}}
		\author{G. Charles-Cadogan
			\thanks{School of Accounting and Finance, University of Leicester; Tel: +44 (0116) 229 7385; e-mail: \href{mailto:gcc13@le.ac.uk}{gcc13@le.ac.uk}}
			%\\~\\\\Comments Welcome
		}
		\maketitle
		
		\begin{abstract}
			\noindent This paper characterizes a class of twice continuously differentiable objective-probability preference representations exhibiting endogenous reference dependence under risk. Weak rank-dependent utility (WRDU) preserves objective probabilities, partitions outcomes at an endogenous reference point, and evaluates lotteries through a gain-loss representation in which the reference point maximizes a penalized functional. The first-order condition yields a virtual loss-aversion index equal to the ratio of marginal utilities across the loss and gain domains, recovering both the utility-based index of K\"{o}bberling and Wakker (2005) and the slope ratio of Tversky and Kahneman (1992) as special cases. The main theorem shows that, within a class satisfying affine admissibility, loss-factorization, dispersion monotonicity, and attenuation, the derivative-ratio form is uniquely admissible. In this class, WRDU generates the modal Allais pattern on an admissible region and blocks the Rabin calibration implication through range-dependent attenuation. The result is conditional and does not claim uniqueness over all behavioral models of risky choice.
			
			\vspace{1ex}
			\noindent\emph{Keywords:}  reference dependence, loss aversion,  gain-loss state space, Allais paradox, Rabin calibration, axiomatic foundations
			
			\vspace{1ex}
			\noindent\emph{JEL Classification Codes:} C02, D03, D81
		\end{abstract}
		
	\end{singlespace}
	
	% --- TABLE OF CONTENTS: ROMAN NUMERALS, SINGLE-SPACED ---
	\newpage
	\pagenumbering{roman}
	\thispagestyle{empty}
	\singlespace
	\tableofcontents
	\onehalfspacing
	
	% --- MAIN TEXT: ARABIC NUMERALS, ONE-HALF SPACING ---
	\newpage
	\pagenumbering{arabic}
	\hypersetup{pageanchor=true}
	
	\begin{quote}
		``There is no reason [] why [] probabilities should be singled out as the parts of the situation in which objective and subjective values differ. It would be possible instead to assume that the subjective value of a given amount of money differs from its objective value, a notion that has great importance and a long history in economics.'' \citet[p.~350]{Edwards1953}.
	\end{quote}
	
	\section{Introduction}\label{sec:Introduction}
	
	\tab This paper characterizes a restricted but economically meaningful class of objective-probability preferences under risk. The question is whether reference dependence and loss aversion can be generated from the structure of utility itself, rather than from probability distortion, an exogenous loss-aversion coefficient, or an expectations-equilibrium reference point. The answer is affirmative, but deliberately conditional: within a $C^2$ comparison class satisfying affine admissibility, loss-factorization, dispersion monotonicity, and attenuation, the admissible representation is the WRDU derivative-ratio form.
	
	\tab The model, weak rank-dependent utility (WRDU), splits the payoff space at a candidate reference point into virtual gain and loss regions. Let $x_r$ denote the endogenous reference point, let $y<x_r<z$ denote representative loss- and gain-side outcomes, and let $v_\ell$ and $v_g$ denote the normalized loss and gain subutilities. The split utility is recombined through a penalized representation in which the loss-side component enters with the Lagrangian penalty $\rho(x_r;y,z)=1/\lambda(x_r;y,z)$. The reference point is selected as a maximizer of this penalized functional. At an interior optimum, the first-order condition yields
	\begin{equation}
		\lambda(x_r;y,z)=\frac{v_\ell'(y)}{v_g'(z)},
	\end{equation}
	so the virtual loss-aversion index is determined by curvature asymmetry and outcome range, not introduced as an independent primitive.
	
	\tab This derivative-ratio index recovers the \citet{KobberlingWakker2005} utility-based index of loss aversion and, as a symmetric-curvature special case, the \citet{TverKahn1992} ratio of slopes. The connection is not a measurement convention imposed after the fact: it follows from the optimality condition of the penalized reference-point problem. Because curvature comparisons are not invariant under arbitrary increasing transformations, the admissible gain and loss subutilities are treated as normalized $C^2$ representatives, in the spirit of \citet{Debreu1976}.
	
	\tab The paper's main contribution is a restricted-class theorem, not a claim that WRDU replaces cumulative prospect theory, rank-dependent utility, K\"{o}szegi-Rabin preferences, disappointment aversion, regret theory, or other behavioral models of risky choice. The representation theorem establishes WRDU from six axioms: Completeness, Transitivity, Continuity, Weak Independence, Reference Partition, and Range Dependence. The structural uniqueness theorem then shows that, within the specified objective-probability comparison class, the derivative-ratio form is the unique admissible structure. Thus the paper does not claim uniqueness over all behavioral models of risky choice; it identifies the structure forced by a particular set of economically interpretable restrictions, including a loss-factorization restriction that makes the comparison class explicit.
	
	\tab The implications for the Allais and Rabin paradoxes are also stated conditionally. WRDU obtains the modal Allais pattern without probability weighting exactly on the admissible region $\mathcal R_A$, defined by two explicit inequalities on the normalized gain and loss subutilities. The \citet{Rabin2000} calibration implication is blocked within the same restricted class because the effective penalty $1/\lambda(x_r;y,z)$ attenuates with outcome dispersion. Large values of $\lambda$ are supported independently: \citet{FishburnKochenberger1979} document large below-target to above-target slope ratios in two-piece von Neumann-Morgenstern utility assessments, and \citet{CharlesCadogan2018c} proves that the implied loss-aversion index lies in the half-Cauchy class.
	
	\tab The result differs from the main alternatives in a narrow but important respect. Rank-dependent utility and cumulative prospect theory transform probabilities; WRDU does not. CPT and K\"{o}szegi-Rabin models use an exogenous loss-aversion parameter; WRDU derives the index from the first-order condition. K\"{o}szegi-Rabin determines reference points by expectations equilibrium; WRDU selects the reference point by utility maximization. Regret and disappointment models use fixed comparison penalties; WRDU makes the penalty range-dependent.
	
	\tab Recent work by \citet{CerreiaVioglioDillenbergerOrtoleva2024} develops Cautious Utility, an objective-probability model in which agents evaluate lotteries using the most pessimistic certainty equivalent over a set of utility functions. Cautious Utility generates the endowment effect, loss aversion for risk, and the certainty effect without probability weighting. The mechanism differs from WRDU in a fundamental way: Cautious Utility operates through utility multiplicity and a cautious (infimum) selection rule, whereas WRDU retains a single normalized $C^2$ representative and derives reference dependence from curvature asymmetry and dispersion-dependent penalization. The structural uniqueness result established below is therefore relative to a restricted single-utility comparison class and does not claim exclusivity relative to alternative mechanisms such as Cautious Utility.
	
	\tab The empirical motivation is also consistent with recent large-scale evidence on risky choice. \citet{Peterson_et_al_2021} use interpretable machine-learning methods on a large risky-choice experiment and show that models whose utility and probability-weighting components vary with features of the choice environment outperform more restrictive fixed-form specifications. Their results do not identify WRDU and should not be read as a test of the representation theorem. They do, however, support the narrower premise that payoff range, extremal outcomes, and outcome variability are empirically relevant state variables in risky choice. WRDU provides an axiomatic mechanism for one such form of context dependence while preserving objective probabilities.
	
	\tab The framework nests EUT, preserves objective probabilities, and imposes strict structural restrictions relative to CPT, K\"{o}szegi-Rabin, Gul's disappointment aversion, and regret theory. In experimental implementations, only two curvature parameters are estimated, with the implied virtual loss-aversion index computed analytically. WRDU therefore delivers a parsimonious objective-probability account of endogenous reference dependence for the class characterized by the theorem.
	
	\tab The paper is organized as follows. \cref{sec:Positioning} positions WRDU in the landscape of non-expected utility theories. \cref{sec:Axioms} presents the axiomatic foundations. \cref{sec:WRDUandGLstateSpace} develops the virtual state space. \cref{sec:PenalizedUtil} derives the penalized utility representation. \cref{sec:LossAversionIndex} establishes the connection between WRDU's virtual loss aversion index and the standard utility-based measures in the literature. \cref{sec:StructuralUniqueness} establishes the structural uniqueness of WRDU relative to state-dependent utility. \cref{sec:ComparativeStatics} presents comparative statics and the admissible-region Allais and Rabin implications. \cref{sec:EmpiricalEvidence} records empirical implications. \cref{sec:Conclusion} concludes. Proofs, estimation strategy, additional figures, and supplementary material are provided in the Internet Appendix.
	
	\section{Preliminaries}\label{sec:Preliminaries}
	
	\tab Let $X$ be a compact interval of prizes $X = [w, b] \subset \mathbb{R}_+$. Since the support is compact, the upper bound $b$ plays the role of large stakes. Throughout, statements of the form $z \to \infty$ are to be understood as $z \to b^-$. Let $\mathcal{L}$ denote the set of all simple lotteries over $X$:
	
	\begin{equation}
		\mathcal{L} = \left\{ p : X \to [0,1] \;\middle|\; \operatorname{supp}(p) \text{ is finite and } \sum_{x \in X} p(x) = 1 \right\}.
		\label{eq:LotterySpace}
	\end{equation}
	
	\tab Thus, a lottery $p \in \mathcal{L}$ is a probability distribution with finite support on the prize space $X$. The set $\mathcal{L}$ is convex: for any $p,q \in \mathcal{L}$ and $\alpha \in [0,1]$, the mixture $\alpha p + (1-\alpha)q \in \mathcal{L}$ defined by $[\alpha p + (1-\alpha)q](x) = \alpha p(x) + (1-\alpha)q(x)$ for all $x \in X$ is also a simple lottery.
	
	\tab Throughout, $\succeq$ denotes a binary preference relation on $\mathcal{L}$. We use $p \succ q$ to denote $p \succeq q$ and not $q \succeq p$, and $p \sim q$ to denote $p \succeq q$ and $q \succeq p$.
	
	\section{Positioning WRDU in the Landscape of Non-Expected Utility Theories}\label{sec:Positioning}
	
	\tab This section gives the minimum comparison needed to locate the contribution. A more detailed model-by-model taxonomy is provided in the Internet Appendix.
	
	\tab WRDU differs from rank-dependent and cumulative prospect theory models by preserving objective probabilities. The source of non-expected utility behavior is not a transformation of probabilities but a gain-loss partition of the utility domain and a dispersion-dependent Lagrangian penalty $\rho=1/\lambda$. It also differs from \citet{TverKahn1992}, \citet{KoszegiRabin2006,KoszegiRabin2007}, \citet{Gul1991}, and regret models because the relevant comparison penalty is not a fixed primitive. The loss-aversion index is derived from the first-order condition of the penalized reference-point problem and varies with the relevant loss-gain pair.
	
	\tab The closest recent objective-probability comparison is \citet{CerreiaVioglioDillenbergerOrtoleva2024} whom extend \citet{CerreiaVioglioDillengerOrtoleva2015}. Their Cautious Utility model evaluates lotteries by the lowest monetary certainty equivalent over a set of utility functions, thereby explaining reference effects and certainty effects through uncertainty over trade-offs and cautious selection. WRDU is complementary rather than substitutive: it retains a single normalized $C^2$ representative and studies how curvature asymmetry and dispersion-dependent penalization generate the derivative-ratio index $\lambda(x_r;y,z)=v_\ell'(y)/v_g'(z)$. Thus the uniqueness claim below is relative to a restricted single-utility comparison class, not to the broader class of multiple-utility or worst-certainty-equivalent models.
	
	\tab The structural distinctions can be summarized as follows. Probability-distortion models alter the weights on states; expectations-based models determine reference points by equilibrium beliefs; Cautious Utility uses utility multiplicity; disappointment and regret models use fixed comparison penalties. WRDU instead derives the reference point and the loss penalty from a penalized utility maximization problem. The admissible-class theorem below should therefore be read as a conditional characterization of that mechanism. The internet appendix contains a review of the major nonexpected utility models.

	\section{Axiomatic Foundations of WRDU}\label{sec:Axioms}
	
	\tab This section presents the six axioms that characterize WRDU preferences. The axioms build on the von Neumann-Morgenstern framework but replace the strong independence axiom with a weaker version and introduce axioms for endogenous reference point formation and range-dependent loss aversion.
	
	\subsection{The Standard Axioms}
	
	\begin{axiom}[Completeness]\label{axiom:Compl}
		For any lotteries $p,q \in \mathcal{L}$, either $p \succeq q$, $q \succeq p$, or both (in which case $p \sim q$).
	\end{axiom}
	
	\begin{axiom}[Transitivity]\label{axiom:Trans}
		For any lotteries $p,q,r \in \mathcal{L}$, if $p \succeq q$ and $q \succeq r$, then $p \succeq r$.
	\end{axiom}
	
	\begin{axiom}[Continuity]\label{axiom:Cont}
		For any lotteries $p,q,r \in \mathcal{L}$, the sets
		\[
		\{\alpha \in [0,1] : \alpha p + (1-\alpha)q \succeq r\}
		\]
		and
		\[
		\{\alpha \in [0,1] : r \succeq \alpha p + (1-\alpha)q\}
		\]
		are closed.
	\end{axiom}
	
	\tab These three axioms are standard. Together with the definition of $\mathcal{L}$ as the set of simple lotteries over a compact interval, they ensure that the preference relation is a continuous weak order and that certainty equivalents are well-defined.
	
	\subsection{Weak Independence}
	
	\begin{defn}[Closed convex set of pseudo reference points]
		Let $\widehat{C}_r(X) = C_r(X) \cap \{x \geq 0\}$. This is the maximal compact convex subset of the generally nonconvex set $C_r(X)$ of pseudo reference points. The set $\widehat{C}_r(X)$ is closed and convex by \Cref{lem:ClosedConvexRefPts} (see also \citet[Lemmas 2.3-2.4]{CharlesCadogan2016}).
	\end{defn}
	
	\begin{axiom}[Weak Independence]\label{axiom:WeakInd}
		The preference relation satisfies the following two restrictions.
		
		\noindent\textup{(i) Partition-preserving mixture independence.}
		Let $p,q,r\in\mathcal L$ and $\alpha\in[0,1]$. Suppose $p\succeq q$.
		If the mixtures $\alpha p+(1-\alpha)r$ and $\alpha q+(1-\alpha)r$
		share the same reference point and no outcome crosses the induced
		gain-loss partition, then
		\begin{equation}
			\alpha p + (1-\alpha)r \succeq \alpha q + (1-\alpha)r.
		\end{equation}
		
		\noindent\textup{(ii) Generalized lottery-pair equivalence.}
		For any lottery $L\in\mathcal L$ with support $X$ and reference point
		$x_r\in X$, let $C_\ell(X)$ and $C_g(X)$ denote the induced loss and
		gain regions, and let $\widehat C_r(X)$ be the closed convex set of
		pseudo reference points defined above. Define
		\begin{align}
			L_M(X)
			&=
			\left\{
			\left(G(x_{j+1}),C_g(X)\right);
			\left(F(x_{j-1}),-\frac{1}{\lambda}C_\ell(X)\right);
			\left(1-F(x_{j-1})-G(x_{j+1}),0\right)
			\right\},\\
			L_S(X)
			&=
			\left\{
			\left(F_{\delta_{j,\widehat C_r(X)}},\widehat C_r(X)\right);
			\left(1-F_{\delta_{j,\widehat C_r(X)}},0\right)
			\right\},
		\end{align}
		where
		\begin{equation}
			F(x_{j-1})=\sum_{x\in C_\ell(X)}p(x),\qquad
			G(x_{j+1})=\sum_{x\in C_g(X)}p(x),
		\end{equation}
		and
		\begin{equation}
			F_{\delta_{j,\widehat C_r(X)}}=
			1-F(x_{j-1})-G(x_{j+1}).
		\end{equation}
		The decision maker is indifferent between the simple reference lottery
		$L_S(X)$ and the generalized mixed lottery $L_M(X)$ if and only if
		\begin{equation}
			F_{\delta_{j,\widehat C_r(X)}}\widehat C_r(X)
			\sim
			G(x_{j+1})C_g(X)
			-\frac{1}{\lambda}F(x_{j-1})C_\ell(X).
		\end{equation}
	\end{axiom}
	
	\textbf{Discussion.} This axiom replaces the strong VNM independence axiom. Part (i) restricts mixture independence to cases where mixing does not alter the reference-dependent structure of the lotteries. When mixing would change the reference point or cause an outcome to switch between gain and loss regions, independence may fail. Part (ii) is the generalized lottery-pair equivalence from \citet[Section 3.2 and Theorem 3.4]{CharlesCadogan2016}. It is not an additional probability-weighting axiom; it is the set-function aggregation condition that maps the reference lottery into a gain component and a reciprocally penalized loss component while preserving objective probabilities.
	
	\subsection{Reference Partition}
	
	\begin{axiom}[Reference Partition]\label{axiom:RefPartition}
		For any lottery $p \in \mathcal{L}$ with support $X$, there exists a reference point $x_r(p) \in X$ such that:
		\begin{enumerate}
			\item $X = C_\ell(X) \cup \{x_r(p)\} \cup C_g(X)$;
			\item $C_\ell(X) = \{x \in X : x \prec_{x_r(p)} x_r(p)\}$;
			\item $C_g(X) = \{x \in X : x \succ_{x_r(p)} x_r(p)\}$;
			\item The reference point $x_r(p)$ maximizes the penalized utility functional:
			\begin{equation}
				x_r(p) = \arg\max_{x \in X} \left[v_g(z(x)) - \frac{1}{\lambda(x)}v_\ell(y(x))\right],
			\end{equation}
			where $z(x) \in C_g(X)$ and $y(x) \in C_\ell(X)$ are selection maps defined as:
			\begin{align}
				z(x) &= \sup\{z' \in C_g(X) : z' \leq x\} \text{ if } x \in C_g(X),\\
				y(x) &= \inf\{y' \in C_\ell(X) : y' \geq x\} \text{ if } x \in C_\ell(X).
			\end{align}
		\end{enumerate}
		
		\tab The reference point is choice-dependent: different choice problems may induce different reference points.
	\end{axiom}
	
	\textbf{Discussion.} This axiom formalizes the gain-loss split around an endogenous reference point. Unlike \citet{KoszegiRabin2006,KoszegiRabin2007}, where the reference point is determined by rational expectations equilibrium, here the reference point emerges as the maximizer of the penalized utility functional.
	
	\tab The existence of the reference point follows from the topological structure established in \citet[Lemmas 2.3-2.4]{CharlesCadogan2016}. The set $\widehat{C}_r(X)$ of pseudo reference points is closed and convex, and by Proposition 2.6 of \citet{CharlesCadogan2016}, the VNM utility functional extends to this set. The maximization problem therefore has a solution by compactness.
	
	\subsection{Range Dependence / Penalty Responsiveness}
	
	\begin{axiom}[Range Dependence]\label{axiom:RangeDep}
		Let $y < x_r < z$ be outcomes in the loss and gain regions respectively. The virtual loss aversion index $\lambda(x_r; y, z)$ satisfies:
		\begin{enumerate}
			\item \textbf{Monotonicity in the gain outcome:} $\frac{\partial \lambda}{\partial z} > 0$.
			\item \textbf{Monotonicity in the loss outcome:} $\frac{\partial \lambda}{\partial y} < 0$.
			\item \textbf{Tail attenuation:} $\lim_{z \to \infty} \lambda(x_r; y, z) = \infty$, $\lim_{y \to 0^+} \lambda(x_r; y, z) = \infty$.
			\item \textbf{The effective penalty $1/\lambda(x_r; y, z)$ is decreasing in outcome dispersion.}
		\end{enumerate}
	\end{axiom}
	
	\textbf{Discussion.} \Cref{axiom:RangeDep} is the key innovation of WRDU. It states that the effective loss penalty varies with the scale of outcomes. As the range expands, $1/\lambda$ attenuates. This property is behaviorally observable through the comparative statics derived in \cref{sec:ComparativeStatics}: the switching probability in multiple price lists varies systematically with outcome dispersion. The axiom is not reverse-engineered to address the Rabin calibration implication; it formalizes the range dependence required to make loss aversion endogenous while preserving affine invariance. The endpoint divergence is imposed as part of the admissible WRDU class and is supported by the large two-piece utility slope ratios documented by \citet{FishburnKochenberger1979} and by the half-Cauchy characterization in \citet{CharlesCadogan2018c}; it is not claimed to follow from concavity alone.
	
	\begin{rem}[Range Dependence and Boundary Conditions]
		\Cref{axiom:RangeDep} is a structural restriction on the admissible WRDU index rather than a consequence of concavity alone. Strict concavity implies that $v_g'$ and $v_\ell'$ are nonincreasing, but it does not by itself imply $v_g'(z)\to 0$ at the upper endpoint or $v_\ell'(y)\to \infty$ at the lower endpoint. Those limiting properties are Inada-type boundary conditions on the admissible subutility functions.
		
		\tab When the admissible class satisfies
		\begin{equation}
			\lim_{z \to b^-}v_g'(z)=0,\qquad \lim_{y\to 0^+}v_\ell'(y)=+\infty,
		\end{equation}
		the derivative-ratio index $\lambda(x_r; y, z) = v_\ell'(y)/v_g'(z)$ satisfies:
		\begin{equation}
			\lim_{z \to b^-} \lambda(x_r; y, z) = \infty, \qquad \lim_{y \to 0^+} \lambda(x_r; y, z) = \infty.
		\end{equation}
		
		\tab The monotonicity properties $\partial \lambda/\partial z > 0$ and $\partial \lambda/\partial y < 0$ follow from $v_g'' < 0$, $v_\ell'' < 0$, and positive marginal utilities. Thus \Cref{axiom:RangeDep} combines ordinary diminishing sensitivity with explicit boundary conditions that make the effective penalty $1/\lambda$ attenuate at large ranges.
	\end{rem}
	\begin{lem}[Derivative Behavior of Concave Functions]\label{lem:ConcaveDerivativeBoundary}
		Let $f:[a,b]\to\mathbb{R}$ be strictly concave and continuously differentiable on $(a,b)$. Then $f'$ is nonincreasing on $(a,b)$. If, in addition, $f$ satisfies the boundary conditions
		\begin{equation}
			\lim_{x\to b^-} f'(x) = 0
			\quad\text{and}\quad
			\lim_{x\to a^+} f'(x) = +\infty,
		\end{equation}
		then these limits hold by assumption and are preserved under positive affine transformations of $f$.
	\end{lem}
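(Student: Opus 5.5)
The lemma has two parts: monotonicity of $f'$, which is a genuine consequence of strict concavity, and a statement about the boundary limits, which is partly definitional and partly an invariance claim. I would treat them separately.

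For monotonicity, I would use the secant-slope characterization of concavity. Take $a<x_1<x_2<b$ and any $h>0$ small enough that $x_1+h<x_2$ and $x_2+h<b$. The three-chord inequality for a concave function gives
\[
\frac{f(x_1+h)-f(x_1)}{h}\;\ge\;\frac{f(x_2)-f(x_1)}{x_2-x_1}\;\ge\;\frac{f(x_2+h)-f(x_2)}{h}.
\]
Letting $h\downarrow 0$ and using differentiability on $(a,b)$ yields $f'(x_1)\ge f'(x_2)$. So $f'$ is nonincreasing.

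Strictness upgrades this to strictly decreasing. Suppose $f'(x_1)=f'(x_2)$ for some $x_1<x_2$. Then $f'$ is constant on $[x_1,x_2]$, so $f$ is affine there, which contradicts strict concavity. Only the weak version is claimed, so this is a side remark. Continuity of $f'$ is not needed for either step.

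For the boundary statement, the limits $\lim_{x\to b^-}f'(x)=0$ and $\lim_{x\to a^+}f'(x)=+\infty$ are hypotheses, so "these limits hold" is immediate. The substantive content is invariance under positive affine maps. Let $g=\alpha f+\beta$ with $\alpha>0$ and $\beta\in\mathbb{R}$. Then $g$ is strictly concave and $C^1$ on $(a,b)$, and $g'=\alpha f'$ there. Hence
\[
\lim_{x\to b^-}g'(x)=\alpha\cdot 0=0,
\]
and $\lim_{x\to a^+}g'(x)=+\infty$, because multiplying by $\alpha>0$ preserves divergence to $+\infty$: given $M$, choose $\delta$ with $f'(x)>M/\alpha$ on $(a,a+\delta)$.

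The positive-affine class also matters downstream. Because $\alpha$ multiplies both $v_\ell'$ and $v_g'$, a common affine transformation leaves the ratio $v_\ell'(y)/v_g'(z)$ in the preceding Remark unchanged, so the limits for $\lambda$ persist under the normalization. This holds only for a common $\alpha$; separate rescalings of $v_\ell$ and $v_g$ would multiply $\lambda$ by a positive constant, which still preserves both divergences.

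The only step with real content is the difference-quotient limit in the monotonicity argument, specifically choosing $h$ so that all evaluation points stay inside $(a,b)$. The rest is bookkeeping. One caveat: if $a=0$ and $f$ is required to be finite at $a$, the hypothesis $f'\to+\infty$ is compatible with this, as $f=\sqrt{x}$ shows. So the statement is consistent and no further argument is needed.
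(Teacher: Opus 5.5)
Your proof is correct and follows the paper's argument: monotone secant slopes of a concave function give $f'$ nonincreasing, the endpoint limits are taken as hypotheses, and $\tilde f'=\alpha f'$ shows that positive affine maps preserve them; your three-chord inequality and $h\downarrow 0$ limit just spell out what the paper states in one line. One small slip in your closing aside: $\sqrt{x}$ only illustrates the lower-endpoint condition, since its derivative at $b$ is $1/(2\sqrt{b})\neq 0$, whereas $\sqrt{x}-x/(2\sqrt{b})$ on $[0,b]$ satisfies both limits at once while remaining finite and strictly concave on the closed interval.
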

	
	\begin{proof}
		For a differentiable concave function, the slope of every secant line is nonincreasing as the interval moves to the right; hence $f'$ is nonincreasing on $(a,b)$. The two boundary limits are not implied by boundedness on a compact interval. They are additional boundary restrictions on the admissible class. If $\tilde f=\alpha f+\beta$ with $\alpha>0$, then $\tilde f'=\alpha f'$, so the limits $0$ and $+\infty$ are unchanged.
	\end{proof}
	\subsection{The Representation Theorem}
	
	\begin{thm}[WRDU Representation Theorem]\label{thm:WRDU_Representation}
		A preference relation $\succeq$ on $\mathcal{L}$ satisfies Axioms \ref{axiom:Compl}--\ref{axiom:RangeDep} only if there exist:
		\begin{enumerate}
			\item A utility function $u: X \to \mathbb{R}$ with $u(0) = 0$;
			\item A partition of $X$ into $C_\ell(X)$, $\{x_r\}$, $C_g(X)$ induced by an endogenous reference point $x_r \in X$;
			\item Anchored subutility functions $v_g: C_g(X) \to \mathbb{R}_+$ and $v_\ell: C_\ell(X) \to \mathbb{R}_+$, normalized at the reference point and unique up to positive scaling, where:
			\begin{equation}
				v_g(x) = \alpha_g u_g(x), \qquad v_\ell(x) = \alpha_\ell u_\ell(x),
			\end{equation}
			with $\alpha_g, \alpha_\ell > 0$. Independent additive translations are ruled out by reference-point anchoring;
			\item A virtual loss aversion index $\lambda(x_r; y, z) > 0$ satisfying \Cref{axiom:RangeDep};
		\end{enumerate}
		such that for any lottery $L \in \mathcal{L}$ with support $X$:
		\begin{equation}
			V_{WRDU}(L) = \sum_{x \in C_g(X)} p_x v_g(x) - \frac{1}{\lambda(x_r; y, z)} \sum_{x \in C_\ell(X)} p_x v_\ell(x),
			\label{eq:WRDU_Representation}
		\end{equation}
		and the reference point $x_r$ is the maximizer:
		\begin{equation}
			x_r = \arg\max_{x \in X} \left[v_g(z(x)) - \frac{1}{\lambda(x)}v_\ell(y(x))\right].
			\label{eq:RefPoint_Maximizer}
		\end{equation}
		
		\tab Conversely, any functional of the form \eqref{eq:WRDU_Representation} with the stated primitives satisfies the corresponding WRDU axioms on mixtures that preserve the reference partition. On lotteries whose support lies entirely in the gain region, the representation reduces to standard expected utility with utility index $v_g$.
	\end{thm}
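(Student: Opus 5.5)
The plan is to prove the necessity direction by restricting the preference to the partition cells and building the functional piece by piece. The converse is then checked directly.

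\textbf{Step 1: gain-region expected utility.} Axioms \ref{axiom:Compl}--\ref{axiom:Cont} make $\succeq$ a continuous weak order on $\mathcal L$. Fix a lottery and let \Cref{axiom:RefPartition} supply the reference point $x_r(p)$ and the induced partition $C_\ell(X)\cup\{x_r\}\cup C_g(X)$.

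Consider the subfamily of lotteries sharing this reference point with support inside $C_g(X)\cup\{x_r\}$. No outcome crosses the partition under mixing, so part (i) of \Cref{axiom:WeakInd} is exactly vNM independence on this convex subfamily. The Herstein--Milnor argument then gives a mixture-linear, continuous representation $\sum p_x u_g(x)$. Anchoring $u_g(x_r)=0$ removes the additive constant and leaves uniqueness up to a factor $\alpha_g>0$, which yields $v_g=\alpha_g u_g$.

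\textbf{Step 2: loss-region utility and assembly.} The same argument on the loss cell gives $v_\ell=\alpha_\ell u_\ell\ge0$, measured as loss magnitude from $x_r$. A global $u$ with $u(0)=0$ is obtained by pasting $u_g$ and $-u_\ell$ at $x_r$ and shifting, using the normalization at $x_r$.

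To combine the two cells I would invoke part (ii) of \Cref{axiom:WeakInd}, the generalized lottery-pair equivalence. Write a general lottery as $L_M(X)$, with gain mass $G$, loss mass $F$ and reference mass $1-F-G$. The equivalence $L_S\sim L_M$ identifies the value of $L$ with
\[
G\,\mathbb E[v_g\mid C_g]-\tfrac1\lambda F\,\mathbb E[v_\ell\mid C_\ell].
\]
Expanding the conditional expectations gives \eqref{eq:WRDU_Representation} with objective weights $p_x$. The coefficient $1/\lambda$ is the one appearing in the axiom. Its dependence on $(x_r;y,z)$ and the properties in item 4 are imported verbatim from \Cref{axiom:RangeDep}. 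The boundary behaviour is compatible with positive rescaling by \Cref{lem:ConcaveDerivativeBoundary}.

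\textbf{Step 3: reference point and converse.} \eqref{eq:RefPoint_Maximizer} is item 4 of \Cref{axiom:RefPartition}. Existence of the maximizer comes from compactness of $\widehat C_r(X)$ (\Cref{lem:ClosedConvexRefPts}) together with continuity of $v_g$, $v_\ell$ and $\lambda$.

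For the converse, take a functional of the form \eqref{eq:WRDU_Representation} with fixed partition and fixed $\lambda$. It is linear in probabilities, so it satisfies completeness, transitivity, continuity and part (i) of \Cref{axiom:WeakInd} on partition-preserving mixtures. Part (ii) holds by construction, and the remaining two axioms hold by the stated primitives. If $C_\ell(X)=\emptyset$, the loss term vanishes and $V_{WRDU}$ reduces to expected $v_g$-utility.

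\textbf{Main obstacle.} The hardest step is Step 2. The cellwise vNM representations must be shown to aggregate consistently, with a single common scale, across lotteries whose reference points differ. Independence is only available within a fixed partition, so cross-partition comparisons rest entirely on part (ii) of \Cref{axiom:WeakInd}.

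I would first try to calibrate $\alpha_g$ and $\alpha_\ell$ using degenerate lotteries at $x_r$, which have value $0$ in every cell, and binary lotteries straddling $x_r$. The aim is to show that the ratio $\alpha_\ell/\alpha_g$ is absorbed into $\lambda$, so that only the product normalization matters. If this fails, the necessity claim would have to be stated relative to a fixed reference point rather than globally.
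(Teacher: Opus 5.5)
Your outline is essentially the paper's proof, with more detail filled in. The steps match: a continuous weak order from the first three axioms, expected utility on each cell from partition-restricted independence, $x_r$ and the subutilities from \Cref{axiom:RefPartition}, the penalized aggregation from the generalized lottery-pair equivalence in part (ii) of \Cref{axiom:WeakInd} (for which the paper simply cites Theorem 3.4 of \citet{CharlesCadogan2016}), and the properties of $\lambda$ imported from \Cref{axiom:RangeDep}. The cross-partition obstacle you flag is not addressed in the paper, and it is not needed for the statement as written, which only quantifies over lotteries with a given support $X$ (hence a fixed reference point); your hoped-for calibration also goes through, since the joint rescaling $(v_g,v_\ell,\lambda)\mapsto(\alpha v_g,\beta v_\ell,(\beta/\alpha)\lambda)$ multiplies both $V_{WRDU}$ and the objective in \eqref{eq:RefPoint_Maximizer} by $\alpha$, preserves every clause of \Cref{axiom:RangeDep}, and so absorbs $\alpha_\ell/\alpha_g$ into $\lambda$.
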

	
	\begin{proof}
		The proof follows \citet[Section 3.2]{CharlesCadogan2016}. Axioms \ref{axiom:Compl}--\ref{axiom:Cont} imply the existence of a continuous ordinal utility representation on lotteries. The restricted independence condition in \Cref{axiom:WeakInd}, applied only to mixtures that preserve the reference partition, yields the affine expected-utility form within each gain and loss region. \Cref{axiom:RefPartition} partitions $X$ and defines the subutility functions. Aggregation over generalized lottery pairs gives the penalized representation \citep[Theorem 3.4]{CharlesCadogan2016}:
		
		\begin{equation}
			\sum_j F_{\delta_{j,\widehat{C}_r(X)}}E^P[v(x_j)] = \sum_j G(x_{j+1})E^P[v_g(C_g(X))] - \frac{1}{\lambda}\sum_j F(x_{j-1})E^P[v_\ell(C_\ell(X))].
		\end{equation}
		
		\Cref{axiom:RangeDep} imposes the required properties on $\lambda$. Necessity follows by construction. 
	\end{proof}

	\begin{rem}[Normalized Utility Representatives]\label{rem:DebreuNormalization}
		The admissible classes used below are classes of normalized $C^2$ representatives, not arbitrary ordinal transforms. This matters because concavity comparisons and gain-ratio restrictions are not invariant under all increasing transformations of utility. Following the logic of \citet{Debreu1976}, curvature comparisons are meaningful only after an admissible representative has been selected within the utility equivalence class. Thus the restrictions defining $\mathcal R_A$ should be read as restrictions on the normalized WRDU representatives $(v_g,v_\ell)$.
	\end{rem}
	
	\begin{cor}[Non-Emptiness of the Allais Region]
		The admissible region $\mathcal{R}_A$ defined in \eqref{eq:AdmissibleRegion} is non-empty and has non-empty interior in the admissible $C^2$ WRDU utility class.
		
		\tab Moreover, $\mathcal{R}_A$ is monotone in the following local sense: if $(v_g, v_\ell) \in \mathcal{R}_A$ and $\tilde{v}_g$ increases the gain ratio $v_g(5M)/v_g(1M)$ while $\tilde{v}_\ell$ increases the relevant loss-utility threshold in \eqref{eq:AB_condition}, then $(\tilde{v}_g, \tilde{v}_\ell) \in \mathcal{R}_A$.
	\end{cor}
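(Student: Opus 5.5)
The plan is to treat $\mathcal R_A$ as the set of normalized $C^2$ pairs $(v_g,v_\ell)$ satisfying the two strict inequalities in \eqref{eq:AdmissibleRegion}. I will then argue in three steps: exhibit one explicit member, show that strictness and continuity give an open neighbourhood, and read off the monotonicity from the direction in which each inequality depends on its controlling quantity. I assume, since this is what the modal Allais pattern requires, that the two inequalities are as follows.
\begin{itemize}
\item The first is \eqref{eq:AB_condition}, for the sure thing over the common-consequence lottery $\{(0.10,5M);(0.89,1M);(0.01,0)\}$. With reference point $1M$, the $0$ outcome lies in $C_\ell(X)$, so this inequality has the form $\frac{1}{\lambda}\,0.01\,v_\ell(0) > 0.10\,[v_g(5M)-v_g(1M)]$, up to the reference anchoring of \Cref{thm:WRDU_Representation}.
\item The second is for the scaled-down pair. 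There the reference point sits at the bottom outcome, so both lotteries are evaluated purely on the gain side. By the last sentence of \Cref{thm:WRDU_Representation} they reduce to expected utility with index $v_g$, and the inequality has the form $0.10\,v_g(5M) > 0.11\,v_g(1M)$, that is, a lower bound $v_g(5M)/v_g(1M) > 1.1$ on the gain ratio.
\end{itemize}

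\textbf{Step 1 (a member of $\mathcal R_A$).} Take power subutilities, $v_g(x)=(x-x_r)^{a}$ on the gain side and $v_\ell$ of the form $c\,(x_r-x)^{\beta}$ on the loss side, each anchored at the reference point as in \Cref{thm:WRDU_Representation}. Choose $a\in(0,1)$ not too small, so that the gain ratio exceeds $1.1$. Then choose the loss scale $c$, or equivalently the induced index $\lambda=v_\ell'(y)/v_g'(z)$, so that the penalized loss term dominates $0.10[v_g(5M)-v_g(1M)]$. These pairs can be taken strictly concave with the Inada-type endpoint behaviour, so by the Remark following \Cref{axiom:RangeDep} and by \Cref{lem:ConcaveDerivativeBoundary} they lie in the admissible class. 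Both inequalities then hold strictly, which proves non-emptiness.

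\textbf{Step 2 (non-empty interior).} Both defining inequalities are finite combinations of point evaluations $v_g(1M)$, $v_g(5M)$ and $v_\ell(0)$, together with $\lambda$, which depends continuously on the first derivatives. All of these are continuous in the $C^2$ (indeed $C^1$) topology. So each strict inequality defines an open set, and the Step 1 pair has a $C^2$-neighbourhood contained in $\mathcal R_A$. I must check that this neighbourhood can be intersected with the admissible class without emptying it. This holds because small $C^2$ perturbations preserve strict concavity on compact subintervals and preserve positivity of the derivatives. The boundary limits are preserved if the perturbation has compact support in the interior of $X$, which is the main technical caveat.

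\textbf{Step 3 (local monotonicity).} The second inequality is monotone increasing in the gain ratio, so raising $v_g(5M)/v_g(1M)$ keeps it satisfied. In the first inequality, raising the loss-utility threshold enlarges the left-hand side. Increasing the gain ratio with $v_g(1M)$ held fixed, which is available because of the normalization in \Cref{rem:DebreuNormalization}, raises $v_g(5M)-v_g(1M)$ and so enlarges the right-hand side of the first inequality. The monotonicity statement therefore has to be read jointly and locally: the loss threshold must increase by at least the induced change $0.10\,\lambda\,\Delta v_g(5M)/0.01$. This coupling is the main obstacle. I would first try to resolve it by taking the gain increase small relative to the loss increase, which is exactly the \emph{local} qualifier in the statement.
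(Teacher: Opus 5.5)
\textbf{Verdict.} The three-step outline (explicit witness, openness from strict inequalities, monotonicity by inspection) is the natural one. However, you run it on a set other than $\mathcal R_A$, and your witness step fails as written.

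\textbf{You use the wrong defining inequality.} $\mathcal R_A$ is \emph{defined} in \eqref{eq:AdmissibleRegion} by $v_g(5M)/v_g(1M)>1.1$ and $\lambda(x_r;1M+\delta,4M-\delta)<v_\ell(1M+\delta)/(0.10\,v_g(4M-\delta))$. It consists of pairs normalized by $v_g(0)=v_\ell(0)=0$ (\Cref{thm:AllaisConditions}, \Cref{cor:AllaisMeasure}). You need to take these inequalities as given rather than re-derive the $A\succ B$ condition. Under that normalization, your substitute $\frac{1}{\lambda}\,0.01\,v_\ell(0)>0.10[v_g(5M)-v_g(1M)]$ has left side zero, so no pair satisfies it. Your witness $v_g(x)=(x-x_r)^a$ is also a different function in the two problems. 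With $x_r=1M$ it gives $v_g(1M)=0$, so the gain ratio is undefined.

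\textbf{The scale choice in Step 1 does nothing.} Since $\mathcal R_A\subseteq\mathcal V_g\times\mathcal V_\ell$, $\lambda$ is the derivative ratio $v_\ell'(y_0)/v_g'(z_0)$ of the pair, with $y_0=1M+\delta$ and $z_0=4M-\delta$. Rescaling $v_\ell\mapsto c\,v_\ell$ multiplies both $\lambda$ and the threshold by $c$. So choosing ``the loss scale $c$, or equivalently $\lambda$'' leaves the condition unchanged. The second condition is in fact the scale-free shape restriction
\[
\frac{v_\ell(y_0)}{v_\ell'(y_0)}>0.10\,\frac{v_g(z_0)}{v_g'(z_0)},
\]
so the witness must be chosen by shape. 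For $v_g=x^{\alpha}$ and $v_\ell=x^{\beta}$ the two conditions read $5^{\alpha}>1.1$ and $y_0/\beta>0.10\,z_0/\alpha$. These hold, for example, for $\alpha=\beta\in(\ln 1.1/\ln 5,1)$ and $\delta\ge 0$. You also need to bend $v_g$ near $b$, away from the evaluation points, to get $v_g'(b^-)=0$; a power function on a compact interval does not have this. With such a witness your Step 2 is fine, since only finitely many point values and first derivatives enter.

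\textbf{Step 3 proves a different claim.} You end by strengthening the hypothesis: the threshold must rise by at least a $\lambda$-dependent amount. That is not the stated monotonicity. The coupling you found is an artifact of your substituted inequality. In \eqref{eq:AB_condition} the hypothesis bears directly on the threshold $v_\ell(y_0)/(0.10\,v_g(z_0))$, which does not involve $v_g(5M)$. The first condition is trivially monotone in the gain ratio.

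What genuinely needs control is $\lambda$:
\begin{itemize}
\item If $\lambda$ is held fixed, as in \Cref{thm:AllaisConditions}, item 4(b)--(c), the claim is one line: $\lambda<T\le\tilde T$.
\item If $\lambda$ moves with the pair as the derivative ratio, the literal claim fails. The pair $(x^{0.1},x^{0.2})$ lies in $\mathcal R_A$, because the condition reduces to $z_0<5y_0$. Now replace $v_\ell$ by $c\,x^{1/2}$ with $c$ large. This raises the threshold, but the condition becomes $z_0<2y_0$, which is false for $0\le\delta\le 2M/3$. If the gain ratio must rise strictly, use $x^{0.1+\eta}$ with small $\eta>0$ for $v_g$.
\end{itemize}
So the missing ingredient is an explicit proviso such as $\tilde\lambda\le\lambda$, not an appeal to locality.
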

	
	\begin{proof}
		See the Internet Appendix.
	\end{proof}
	
	\section{Weak Rank Dependent Utility and Virtual Gain-Loss State Space}\label{sec:WRDUandGLstateSpace}
	
	\tab To address \citet{Edwards1953}'s conjecture, we turn to weak rank dependent utility (WRDU) theory \citep{CharlesCadogan2016}. The axiomatic foundation established in \cref{sec:Axioms} provides the formal underpinning.
	
	\tab Suppose $A$ is a lottery over choice set $X$. Ignoring the chance element, let $x^A_{min}$ and $x^A_{max}$ be the minimum and maximum payoffs, and $x^A_r$ be an arbitrary point between them. Let $U$ be a utility function over the entire choice set and $v$ be a reference dependent utility function of $x^A_r$.
	
	\tab In order to characterize preferences over choice sets induced by $x$, we introduce a preference relation indexed by $x$ \citep[pp.~1046,~1050]{TverskyKahneman1991}.
	
	\begin{defn}[Reference dependence structure]\label{defn:RefDepStructure}
		$(X,\succ_x)$ is a \emph{reference dependence structure} such that $x^+ \succ_x x^-$ if $x^+$ is strictly preferred to $x^-$ from reference state $x \in X$.
	\end{defn}
	
	\tab If the WRDU model is applied to $A$, it splits the rank ordered supporting payoff space $C^A(X)$ at $x^A_r$ such that payoffs in $C^A_\ell(X) = [x^A_{min}, x^A_r)$ are less preferred to payoffs in $C^A_g(X) = (x^A_r, x^A_{max}]$. Thus, $C^A_\ell(X) \prec \{x^A_r\} \prec C^A_g(X)$ and $C^A(X) = C^A_\ell(X) \cup \{x^A_r\} \cup C^A_g(X)$.
	
	\begin{defn}[Closed convex set of pseudo reference points]
		Let $\widehat{C}_r(X) = C_r(X) \cap \{x \geq 0\}$.
	\end{defn}
	
	\begin{lem}\label{lem:ClosedConvexRefPts}
		$\widehat{C}_r(X)$ is a closed convex set of pseudo reference points.
	\end{lem}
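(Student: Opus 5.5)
The plan is to pin down what $C_r(X)$ is and then verify closedness and convexity of $\widehat{C}_r(X)=C_r(X)\cap\{x\ge 0\}$ directly. Following the construction just given, and \citet[Lemmas 2.3--2.4]{CharlesCadogan2016}, I read $C_r(X)$ as the set of admissible (pseudo) reference points for a lottery with support in the compact interval $X=[w,b]\subset\mathbb{R}_+$. Concretely, these are the points $x$ lying between the extremal payoffs, $x_{\min}\le x\le x_{\max}$, at which the reference dependence structure $(X,\succ_x)$ of \Cref{defn:RefDepStructure} splits the rank-ordered support into $C_\ell(X)\prec\{x\}\prec C_g(X)$.

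The possible nonconvexity of $C_r(X)$ comes from admitting signed or shifted reference states, for example points of the virtual loss region read as negative values. That is exactly why the intersection with $\{x\ge 0\}$ is taken.

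The proof then has two steps.

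\textbf{Step 1 (closedness).} I will show that $\widehat{C}_r(X)$ is the intersection of three sets: $X$, the half-line $\{x\ge 0\}$, and the set of $x$ satisfying the weak ordering conditions $y\preceq_x x\preceq_x z$ for all $y\in C_\ell(X)$ and $z\in C_g(X)$. By \Cref{axiom:Compl}--\Cref{axiom:Cont}, preferences admit a continuous representation, so each weak inequality defines a closed set. Their intersection is therefore closed. Since $X$ is compact, $\widehat{C}_r(X)$ is also compact.

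\textbf{Step 2 (convexity).} In one dimension, convexity means being an interval. I will show that if $x_1<x_2$ both lie in $\widehat{C}_r(X)$, then any $x\in(x_1,x_2)$ also separates the rank-ordered support in the same way. The argument uses monotonicity of the rank order: everything below $x_1$ is below $x$, and everything above $x_2$ is above $x$. Payoffs strictly between $x_1$ and $x_2$ are split by $x$ consistently with the ordering. In addition, $x\ge x_1\ge 0$. Hence $\widehat{C}_r(X)$ is a closed interval $[\underline{x}_r,\overline{x}_r]$, and it is the maximal compact convex subset of $C_r(X)$, as the definition asserts.

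The main obstacle is Step 2. Convexity requires that the indexed relations $\succ_x$ vary coherently with $x$, and the paper has not stated this explicitly. I would first try to derive it from the rank-ordered splitting $C_\ell(X)=[x_{\min},x_r)$, $C_g(X)=(x_r,x_{\max}]$ used just above, so that membership reduces to the order condition $x_{\min}\le x\le x_{\max}$. That condition is trivially an interval. If coherence cannot be obtained this way, I would fall back on invoking \citet[Lemma 2.4]{CharlesCadogan2016} directly for this property.
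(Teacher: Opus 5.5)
There is a genuine gap, and it affects both steps.

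\textbf{Step 1 (closedness).} You derive closedness from a ``continuous representation,'' but \Cref{axiom:Cont} only makes the sets $\{\alpha\in[0,1]:\alpha p+(1-\alpha)q\succeq r\}$ closed. That is continuity in mixture weights, not in prizes. The map $x\mapsto\delta_x$ is not a mixture path, so nothing follows about prizes. For instance, expected utility with $u=\mathbf{1}_{(1/2,1]}$ on $X=[0,1]$ satisfies Completeness, Transitivity and Continuity, yet $\{x:\delta_x\succeq\delta_1\}=(1/2,1]$ is not closed. A fortiori, nothing in the axioms controls how the indexed relations $\succeq_x$ vary with the index $x$. Closedness of sets such as $\{x:y\preceq_x x\}$ needs exactly that, because $x$ appears both as the reference state and as the evaluated prize. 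Moreover, by \Cref{axiom:RefPartition} the regions $C_\ell(X)$ and $C_g(X)$ are defined relative to the reference point. So your intersection ``over all $y\in C_\ell(X)$ and $z\in C_g(X)$'' is not an intersection over a fixed family of sets.

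The paper's proof needs none of this. It takes $C_g(X)$ and $C_\ell(X)$ to be open by construction and writes $C_r(X)=C_g(X)^c\cap C_\ell(X)^c$. Closedness of $C_r(X)$, and then of $\widehat{C}_r(X)=C_r(X)\cap\{x\ge 0\}$, then follows purely set-theoretically.

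\textbf{Step 2 (convexity).} As you acknowledge, this step is not established, and the reduction you propose cannot be the intended reading. If membership reduced to $x_{\min}\le x\le x_{\max}$, then $C_r(X)$ would itself be an interval contained in $X\subset\mathbb{R}_+$. That contradicts the paper's description of $C_r(X)$ as generally nonconvex and makes the cut by $\{x\ge 0\}$ vacuous. Your explanation that the half-line cut removes the nonconvexity also fails: intersecting with $\{x\ge 0\}$ only deletes points and cannot fill gaps in $C_r(X)\cap[0,\infty)$. Finally, ``everything below $x_1$ is below $x$'' presupposes the very coherence of $\succ_x$ in $x$ that you flag as missing.

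So convexity rests entirely on your fallback citation of \citet[Lemma 2.4]{CharlesCadogan2016}. That is also where the paper's own proof rests: it asserts that the intersection is closed and convex and refers to Lemmas 2.3--2.4 of that paper, with no independent convexity argument. On this point you land where the paper does, but your Step 2 adds nothing to it.

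If you want a self-contained argument, fix $C_\ell(X)$ and $C_g(X)$ and let the split be the numerical order on prizes. The admissible reference points are then $X\cap\{x\ge 0\}\cap\bigcap_{y\in C_\ell(X)}[y,\infty)\cap\bigcap_{z\in C_g(X)}(-\infty,z]$. This is an intersection of closed convex sets, so it gives closedness and convexity at once, with no continuity assumption on preferences.
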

	
	\begin{proof}
		By construction, $C_g(X)$ and $C_\ell(X)$ are open sets. Hence their complements are closed. Since $C_r(X) = C_g(X)^c \cap C_\ell(X)^c$, $C_r(X)$ is closed. $\widehat{C}_r(X)$ is the intersection of $C_r(X)$ with the closed half-line $\{x \geq 0\}$, which is closed and convex. See \citet[Lemmas 2.3-2.4]{CharlesCadogan2016}. 
	\end{proof}
	
	\section{Penalized Reference Dependent Utility Function and the Lagrangian Connection}\label{sec:PenalizedUtil}
	
	\tab The representation in \eqref{eq:WRDU_Representation} can be derived from an optimization problem. This section establishes the connection between the axiomatic representation and the penalized utility function derived from a Lagrangian relaxation.
	
	\begin{thm}[Penalized Reference Dependent Utility from Optimization]\label{thm:PenalizedOptimization}
		Let $\mathcal{U}$ be the set of all utility functions satisfying the von Neumann-Morgenstern axioms. Define a partially ordered non-empty convex set $X$ such that, for an arbitrary reference point $x \in X$,
		\begin{equation}
			X = C_\ell(X) \cup \{x\} \cup C_g(X),
		\end{equation}
		and $C_g(X) \succeq_{\{x\}} C_\ell(X)$. Suppose there are local selection maps $z(x) \in C_g(X)$ and $y(x) \in C_\ell(X)$, and subutility functions $U_g: C_g(X) \to \mathbb{R}_+$ and $U_\ell: C_\ell(X) \to \mathbb{R}_+$. Consider the gain-side program:
		\begin{equation}
			\max_{z \in C_g(X)} U_g(z),
			\label{eq:WRDUoptimizeProblem}
		\end{equation}
		and let $z(x)$ denote the selected gain-side outcome. Treating the loss-side term as a nonnegative penalty, for every positive path-dependent tradeoff parameter $\lambda(x)$ there exists a weakly separable reference dependent utility function
		\begin{equation}
			v(x,\lambda(x)) = v_g(z(x)) - \frac{1}{\lambda(x)}v_\ell(y(x))
			\label{eq:FS_ConvexConcaveRep}
		\end{equation}
		defined on the split state space.
	\end{thm}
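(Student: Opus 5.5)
The plan is to build the weakly separable function \eqref{eq:FS_ConvexConcaveRep} explicitly. I would set up a Lagrangian relaxation of a constrained gain-side problem. The existence claim then follows by construction, with the one real issue being well-definedness on the split state space.

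\textbf{Step 1: the constrained problem.} Fix an arbitrary reference point $x\in X$. Using the partition $X=C_\ell(X)\cup\{x\}\cup C_g(X)$, define the subutilities by restriction:
\[
v_g = U_g\big|_{C_g(X)}, \qquad v_\ell = U_\ell\big|_{C_\ell(X)},
\]
both nonnegative. Since $U_g$ and $U_\ell$ are restrictions of a von Neumann--Morgenstern utility in $\mathcal U$, they are unique up to positive affine transformation. Anchoring at $x$, exactly as in item 3 of \Cref{thm:WRDU_Representation}, removes the additive freedom. Next, recast \eqref{eq:WRDUoptimizeProblem} as a program with a loss-side budget constraint:
\[
\max_{z\in C_g(X)} U_g(z) \quad \text{subject to} \quad U_\ell(y)\le \bar c .
\]
Here the loss-side outcome $y$ is tied to $z$ through the selection maps $z(x)$ and $y(x)$ of \Cref{axiom:RefPartition}.

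\textbf{Step 2: the Lagrangian and the tradeoff parameter.} Form the Lagrangian with multiplier $\rho\ge 0$:
\[
\mathcal L(z,\rho)=U_g(z)-\rho\,\bigl(U_\ell(y)-\bar c\bigr).
\]
Because the constraint is an inequality, $\rho\ge 0$. For a positive tradeoff parameter I write $\rho=1/\lambda(x)$, which makes $\lambda(x)>0$ by construction. Evaluating at the selected pair $(z(x),y(x))$ and dropping the constant $\rho\bar c$, which does not affect the ordering, gives
\[
v(x,\lambda(x)) = v_g(z(x)) - \frac{1}{\lambda(x)}\,v_\ell(y(x)).
\]
This is additively separable across the gain and loss blocks, linked only through the scalar $\lambda(x)$, which is what ``weakly separable'' requires. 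The statement asks for the function to exist for \emph{every} positive $\lambda(x)$, not only at an optimal multiplier. So I would treat $\lambda$ as a free penalty parameter and simply note that the relaxation value is defined for every $\rho>0$; no saddle-point or duality argument is needed.

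\textbf{Step 3: well-definedness (main obstacle).} The hardest step is showing that the selection maps and the maximizer of \eqref{eq:WRDUoptimizeProblem} actually exist, since $C_g(X)$ and $C_\ell(X)$ are open, as shown in the proof of \Cref{lem:ClosedConvexRefPts}. I would handle this in three parts.
\begin{enumerate}
\item Interpret the $\sup$ and $\inf$ in the definitions of $z(x)$ and $y(x)$ as limits in the closures. These closures lie in the compact interval $X$.
\item Extend $v_g$ and $v_\ell$ continuously to the boundary point $x$, which belongs to $\widehat C_r(X)$. This extension is available by \Cref{lem:ClosedConvexRefPts} together with the extension of the VNM functional cited after \Cref{axiom:RefPartition}.
\item Conclude by Weierstrass on the closed sets. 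Monotonicity of $U_g$, inherited from the order $C_g(X)\succeq_{\{x\}}C_\ell(X)$, places the gain-side maximizer at the appropriate endpoint.
\end{enumerate}
If the boundary extension proves delicate, my fallback is to define $v$ through the closures directly. This still yields a function on the split state space, which is all the theorem asserts.
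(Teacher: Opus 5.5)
Your Steps 1--2 match the paper's proof: the anchored subutilities retain only positive scaling, the gain-side program is relaxed by penalizing the loss term with a multiplier $\rho\ge 0$, and setting $\rho=1/\lambda(x)$ and defining $v(x,\lambda(x))$ as the relaxed objective gives the stated form for every $\lambda(x)>0$. Step 3 is unnecessary because the statement takes the selection maps $z(x)$, $y(x)$ as hypotheses, and its appeals to compactness, continuity, and monotonicity of $U_g$ are not supported by those hypotheses in any case; also, the paper omits your budget constant, which avoids a problem, since dropping $\rho\bar c=\bar c/\lambda(x)$ does change the ranking across $x$ when $\lambda$ is path-dependent, although this does not affect the existence claim.
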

	
	\begin{proof}
		For fixed $x$, the split separates gain-side utility from loss-side utility. The original utility representation is von Neumann-Morgenstern affine, but the split subutilities are anchored at the reference point. Hence they are treated as semi-affine representatives: positive scaling is admissible, while independent additive translations are ruled out by the normalization $v_g(0)=v_\ell(0)=0$ and by the reference-point anchoring restriction \citep{vonNeumanMorgenstern1953,Aczel1966,TverskyKahneman1974}.
		
		\tab A Lagrangian relaxation of the gain-side program that penalizes the loss-side term is:
		\begin{equation}
			\mathcal{L}(x;\rho) = v_g(z(x)) - \rho v_\ell(y(x)), \qquad \rho \geq 0.
		\end{equation}
		
		\tab Writing $\rho(x) = 1/\lambda(x)$, with $\lambda(x) > 0$, gives:
		\begin{equation}
			\mathcal{L}(x;1/\lambda(x)) = v_g(z(x)) - \frac{1}{\lambda(x)}v_\ell(y(x)).
		\end{equation}
		
		\tab Defining $v(x,\lambda(x)) = \mathcal{L}(x;1/\lambda(x))$ yields \eqref{eq:FS_ConvexConcaveRep}. The envelope theorem implies $\lambda(x)$ satisfies \Cref{axiom:RangeDep} \citep[Section 4.1]{CharlesCadogan2016}. 
	\end{proof}
	
	\begin{rem}[Relationship between the Representation Theorem and the Optimization Proof]
		The Representation Theorem (\Cref{thm:WRDU_Representation}) establishes the axiomatic foundation: a preference relation satisfies A1--A6 if and only if it admits the WRDU representation. The optimization result (\Cref{thm:PenalizedOptimization}) provides the constructive derivation: the representation emerges as the solution to a penalized optimization problem. The Lagrangian relaxation shows that the penalty coefficient $\rho(x)=1/\lambda(x)$ acts as a shadow price on loss-seeking behavior, and the envelope theorem provides the foundation for the range-dependence property.
	\end{rem}
	
	\begin{rem}[Affine Invariance and the Generalized K\"{o}bberling-Wakker Index]
		The affine invariance of the WRDU representation imposes an identifying restriction on the gain-loss decomposition. The corresponding first-order condition yields a generalized, range-dependent K\"{o}bberling-Wakker loss aversion index:
		\begin{equation}
			\lambda(x_r; y, z) = \frac{v_\ell'(y)}{v_g'(z)}.
			\label{eq:Generalized_KW_Main}
		\end{equation}
		
		\tab At the reference point, this reduces to the standard \citet{KobberlingWakker2005} utility-based index, and the Tversky-Kahneman ratio is obtained as a symmetric-curvature special case. The complete derivation is provided in the Internet Appendix.
	\end{rem}
	
	\section{The Loss Aversion Index: From Exogenous Measurement to Endogenous Derivation}\label{sec:LossAversionIndex}
	
	\tab The loss-aversion index is not introduced as a separate psychological parameter. It is derived from the first-order condition of the penalized utility problem. This section consolidates that result and relates it to the two standard utility-based measures.
	
	\subsection{Endogenous Derivation and Sign Convention}
	
	\tab In WRDU, the value of a candidate reference point $x$ is
	\begin{equation}
		v(x,\lambda(x)) = v_g(z(x)) - \frac{1}{\lambda(x)} v_\ell(y(x)),
		\label{eq:PenalizedUtility_Value}
	\end{equation}
	where $z(x)\in C_g(X)$ and $y(x)\in C_\ell(X)$ are the gain- and loss-side selections induced by the reference partition. At an interior optimal reference point $x_r$, the first-order condition is
	\begin{equation}
		\left.\frac{d}{dx}\left[v_g(z(x)) - \frac{1}{\lambda(x)} v_\ell(y(x))\right]\right|_{x=x_r} = 0.
		\label{eq:FOC_General}
	\end{equation}
	
	\tab Using $z(x_r)=y(x_r)=x_r$, $z'(x_r)=y'(x_r)=1$, and the normalization $v_g(x_r)=v_\ell(x_r)=0$, this condition becomes
	\begin{equation}
		v_g'(x_r)-\frac{1}{\lambda(x_r)}v_\ell'(x_r)=0,
		\label{eq:FOC_Simplified}
	\end{equation}
	so that
	\begin{equation}
		\lambda(x_r)=\frac{v_\ell'(x_r)}{v_g'(x_r)}.
		\label{eq:WRDU_Lambda_At_Ref_NoSign}
	\end{equation}
	
	\tab More generally, away from the reference point,
	\begin{equation}
		\lambda(x_r;y,z)=\frac{v_\ell'(y)}{v_g'(z)}.
		\label{eq:Lambda_Range_Dependent_NoSign}
	\end{equation}
	
	\tab The absence of a negative sign is a convention, not a substantive difference. WRDU writes $v_\ell$ as an increasing loss-side disutility component that is subtracted in \eqref{eq:PenalizedUtility_Value}; the standard gain-loss value-function convention writes the left derivative directly as negative. The two conventions agree in magnitude.
	
	\subsection{Connection to K\"{o}bberling-Wakker and Tversky-Kahneman}
	
	\citet{KobberlingWakker2005} define the utility-based index of loss aversion as
	\begin{equation}
		\lambda_{KW}=-\frac{v'(0^-)}{v'(0^+)},
		\label{eq:KW_Index}
	\end{equation}
	which isolates loss aversion from probability weighting and is invariant to the unit in which outcomes are measured. Under the WRDU sign convention, the local index in \eqref{eq:WRDU_Lambda_At_Ref_NoSign} is the same object:
	\begin{equation}
		\lambda(x_r)=\frac{v_\ell'(x_r)}{v_g'(x_r)}
		=
		-\frac{v'(x_r^-)}{v'(x_r^+)}
		=
		\lambda_{KW}.
		\label{eq:Lambda_Equals_KW_NoSign}
	\end{equation}
	
	\tab The \citet{TverKahn1992} ratio
	\begin{equation}
		\lambda_{CPT}=\frac{-v'(-x)}{v'(x)}
		\label{eq:CPT_Ratio}
	\end{equation}
	is the symmetric-curvature special case:
	\begin{equation}
		\lambda_{CPT}=\lambda_{KW}=\lambda(x_r).
		\label{eq:All_Indices_Equal_NoSign}
	\end{equation}
	
	\tab WRDU does not replace the K\"{o}bberling-Wakker index. It gives that index a preference-theoretic source: $\lambda$ is the shadow tradeoff implied by the penalized reference-point problem.
	
	\subsection{Optimality and Range Dependence}
	
	\tab The first-order condition may hold at more than one critical point. WRDU identifies the reference point with the maximizer:
	\begin{equation}
		x_r = \arg\max_{x \in X} \left[v_g(z(x)) - \frac{1}{\lambda(x)}v_\ell(y(x))\right].
		\label{eq:RefPoint_Maximizer_Optimality}
	\end{equation}
	
	\tab The range-dependent extension generalizes the K\"{o}bberling-Wakker index:
	\begin{equation}
		\lambda(x_r;y,z)=\frac{v_\ell'(y)}{v_g'(z)}.
		\label{eq:Lambda_Range_Dependent_NoSign_Final}
	\end{equation}
	
	\tab This is the additional structure that blocks the Rabin calibration implication. As the gain outcome rises or the loss outcome moves farther into the tail, the effective penalty $1/\lambda(x_r;y,z)$ attenuates.
	
	\subsection{Comparison}
	
	\begin{table}[htbp]
		\centering
		\caption{Loss Aversion Indices: A Comparison}
		\label{tab:LossAversionIndices}
		\small
		\begin{tabular}{lcccc}
			\toprule
			Model & Index & Range Dependent & Endogenous & Source \\
			\midrule
			CPT & $\lambda = \frac{-v'(-x)}{v'(x)}$ & No & No & Exogenous \\
			K\"{o}bberling-Wakker & $\lambda_{KW} = -\frac{v'(0^-)}{v'(0^+)}$ & No & No & Measured \\
			Gul & $\beta$ & No & No & Exogenous \\
			K\"{o}szegi-Rabin & $\lambda$ & No & No & Exogenous \\
			Regret Theory & $R(\cdot)$ & No & No & Exogenous \\
			\textbf{WRDU} & $\lambda = \frac{v_\ell'(y)}{v_g'(z)}$ & \textbf{Yes} & \textbf{Yes} & \textbf{FOC} \\
			\bottomrule
		\end{tabular}
	\end{table}
	
	\section{Structural Uniqueness of WRDU}\label{sec:StructuralUniqueness}
	
	\tab This section establishes that WRDU is structurally distinct from state-dependent utility (SDU) within a restricted comparison class. A central concern in decision theory is whether reference-dependent models are merely special cases of SDU. The following theorem shows that, under the stated affine-admissibility, loss-factorization, dispersion-monotonicity, and attenuation restrictions, the derivative-ratio form of WRDU is the unique admissible structure in that class.
	
	\subsection{General Representation}
	
	\tab We restrict attention to binary lotteries $L = (p,z;1-p,y)$ with $y < x_r < z$. Assume preferences admit a $C^2$ representation of the form
	\begin{equation}
		V(p,y,z) = p\, u_g(z) + (1-p)\, u_\ell(y,z),
		\label{eq:GenRep}
	\end{equation}
	where $u_g : \mathbb{R}_+ \to \mathbb{R}$ and $u_\ell : \mathbb{R}_- \times \mathbb{R}_+ \to \mathbb{R}$.
	
	\tab Standard separable SDU corresponds to the special case $u_\ell(y,z)=u_\ell(y)$. WRDU corresponds to $u_\ell(y,z) = -\frac{v_g'(z)}{v_\ell'(y)} v_\ell(y)$.
	
	\subsection{Assumptions}
	
	\tab The following axioms characterize the class of admissible representations.
	
	\paragraph{Assumption A1 (Monotonicity).}
	For all $(y,z)$,
	\begin{equation}
		u_g'(z) > 0, \qquad -\frac{\partial u_\ell}{\partial y}(y,z) > 0.
		\label{eq:Monotonicity}
	\end{equation}
	
	\paragraph{Assumption A2 (Diminishing Sensitivity).}
	\begin{equation}
		u_g''(z) < 0, \qquad \frac{\partial^2 u_\ell}{\partial y^2}(y,z) < 0.
		\label{eq:Diminishing}
	\end{equation}
	
	\paragraph{Assumption A3 (Affine Invariance).}
	For all $a>0$ and $b\in\mathbb{R}$, the ranking of lotteries is preserved under the transformation $x \mapsto ax+b$.
	
	\paragraph{Assumption A4 (Dispersion Monotonicity).}
	Let $p^*(y,z)$ satisfy $V(p^*,y,z)=0$. Then
	\begin{equation}
		\frac{\partial p^*(y,z)}{\partial z} < 0.
		\label{eq:DispersionMonotonicity}
	\end{equation}
	
	\paragraph{Assumption A5 (Rabin Attenuation).}
	For each fixed $y$,
	\begin{equation}
		\lim_{z\to\infty} \left| \frac{u_\ell(y,z)}{u_g(z)} \right| = 0.
		\label{eq:RabinAttenuation}
	\end{equation}
	
	\paragraph{Assumption A6 (Loss-Factorization).}
	The loss-side term admits a scalar penalty representation
	\begin{equation}
		u_\ell(y,z)=-\phi(z)v_\ell(y),
		\label{eq:LossFactorization}
	\end{equation}
	for some positive $C^1$ function $\phi$ and strictly increasing, strictly concave $C^2$ loss subutility $v_\ell$. This assumption restricts the comparison class to models in which the gain-side outcome changes the intensity of the loss penalty but not the ordinal shape of the loss subutility.
	
	\subsection{Main Result}
	
	\begin{thm}[Restricted-Class Characterization]\label{thm:StructuralUniqueness}
		Let $\mathcal{C}$ denote the class of preference relations over binary lotteries 
		$L=(p,z;1-p,y)$ with $y<z$ that admit a twice continuously differentiable 
		representation
		\begin{equation}
			V(p,y,z)=p\,u_g(z)+(1-p)\,u_\ell(y,z),\label{eq:UniqueForm}
		\end{equation}
		where $(u_g,u_\ell)$ satisfy:
		
		\begin{enumerate}
			\item[(A1)] $u_g'(z)>0$ and $-\partial u_\ell/\partial y>0$, where $y$ indexes loss severity;
			\item[(A2)] $u_g''(z)<0$ and $\partial^2 u_\ell/\partial y^2<0$;
			\item[(A3)] outcome-space affine admissibility: rankings are invariant under $x\mapsto ax+b$ for all $a>0$;
			\item[(A4)] dispersion monotonicity: if $p^*(y,z)$ solves $V(p^*,y,z)=0$, then $\partial p^*/\partial z<0$;
			\item[(A5)] attenuation: for each fixed $y$, $\lim_{z\to\infty}|u_\ell(y,z)/u_g(z)|=0$;
			\item[(A6)] loss-factorization: $u_\ell(y,z)=-\phi(z)v_\ell(y)$ for some positive $C^1$ function $\phi$ and strictly increasing, strictly concave $C^2$ function $v_\ell$.
		\end{enumerate}
		
		\tab Then every $V\in\mathcal{C}$ admits a representation of the form
		\begin{equation}
			V(p,y,z)=p\,v_g(z)-(1-p)\frac{v_g'(z)}{v_\ell'(y)}v_\ell(y),
		\end{equation}
		for some strictly concave $C^2$ gain subutility $v_g$.
		
		\tab Conversely, any functional of this form with $(v_g,v_\ell)$ satisfying (A1)--(A2) and the boundary conditions implied by (A5) belongs to $\mathcal{C}$.
	\end{thm}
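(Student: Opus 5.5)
The plan is to reduce everything to the loss-factorization restriction (A6) and then identify the scalar penalty $\phi$ with the reciprocal of the derivative-ratio index from \cref{sec:LossAversionIndex}. By (A6), every $V\in\mathcal C$ has the form
\begin{equation*}
V(p,y,z)=p\,u_g(z)-(1-p)\,\phi(z)\,v_\ell(y),
\end{equation*}
so the whole class is parametrized by the triple $(u_g,\phi,v_\ell)$. The task is to show that, after an admissible renormalization, $\phi$ can be written as $v_g'(z)/v_\ell'(y)$ with $v_g$ strictly concave and $C^2$. I would take $v_g:=u_g$ up to a positive scalar. (A1)--(A2) then give $v_g'>0$ and $v_g''<0$ directly, and (A6) gives $v_\ell$ strictly increasing and strictly concave.

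\textbf{Step 1: translate (A4) and (A5) into conditions on $\phi$.} Solving $V(p^*,y,z)=0$ gives
\begin{equation*}
p^*=\frac{\phi(z)v_\ell(y)}{u_g(z)+\phi(z)v_\ell(y)}.
\end{equation*}
Differentiating in $z$ shows that $\partial p^*/\partial z<0$ holds if and only if $\phi'(z)u_g(z)-\phi(z)u_g'(z)<0$, that is, if and only if $\phi/u_g$ is strictly decreasing. (A5) says precisely that $\phi(z)/u_g(z)\to 0$ as $z\to b^-$. So (A4)--(A5) are equivalent to requiring that the effective penalty relative to gain utility attenuates monotonically to zero. This is the content of item 4 of \Cref{axiom:RangeDep} with $1/\lambda=\phi$.

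\textbf{Step 2: identify $\phi$ via the first-order condition.} Read $1/\lambda(x_r;y,z):=\phi(z)$ as the penalty in the functional of \Cref{axiom:RefPartition}. The first-order condition \eqref{eq:FOC_General}, in its range-dependent form \eqref{eq:Lambda_Range_Dependent_NoSign}, then forces $\phi(z)=v_g'(z)/v_\ell'(y)$ at the selected pair $(y,z)$.

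The main obstacle is that $\phi$ depends only on $z$, whereas $v_g'(z)/v_\ell'(y)$ depends on $y$ as well. I would handle this in two parts:
\begin{enumerate}
\item Fix $y$ within each binary comparison family. This is legitimate because the reference partition, and hence the loss selection $y(x_r)$, is held fixed along the family, as in \Cref{axiom:WeakInd}(i).
\item Use (A3) together with the positive-scaling freedom of the anchored subutilities in \Cref{thm:WRDU_Representation}(3) to absorb the constant $v_\ell'(y)$ into the scale $\alpha_g$ of $v_g$.
\end{enumerate}
Concretely, I would define $v_g$ as a normalized antiderivative of $\phi(z)\,v_\ell'(y)$. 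I would then check that this $v_g$ represents the same gain-side ranking as $u_g$ on the restricted family, which is where the affine admissibility in (A3) enters. If it cannot be made exact, I would fall back to stating the identification up to the normalization of \Cref{rem:DebreuNormalization}. Strict concavity of $v_g$ then needs $\phi'<0$, which I expect to extract from Step 1 combined with $u_g''<0$.

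\textbf{Step 3: the converse.} Given $(v_g,v_\ell)$ satisfying (A1)--(A2), set $\phi:=v_g'/v_\ell'(y)>0$ and $u_\ell:=-\phi\,v_\ell$.
\begin{itemize}
\item (A6) holds by construction, and (A1)--(A2) are inherited from $(v_g,v_\ell)$.
\item (A3) holds because positive affine maps of outcomes rescale $v_g'$ and $v_\ell'$ by the same factor, as in \Cref{lem:ConcaveDerivativeBoundary}.
\item (A4) reduces, by Step 1, to $(v_g'/v_g)'<0$, which follows from $v_g''<0$ and $v_g'>0$ for positive $v_g$.
\item (A5) follows from the Inada-type boundary condition $v_g'(z)\to0$ as $z\to b^-$.
\end{itemize}
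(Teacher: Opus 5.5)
Your Step~2, which carries the entire content of the theorem, does not go through. $V$ is affine in $p$, and the switching condition in (A4) presupposes $u_g>0>u_\ell$. Hence any representation of the target form must equal $\alpha V+\beta$ with $\alpha>0$. Matching the two terms, using $u_g''<0$ and $v_\ell'>0$, forces $\beta=0$, $v_g=\alpha u_g$, and
\[
\phi(z)\,v_\ell(y)=u_g'(z)\,\frac{\tilde v_\ell(y)}{\tilde v_\ell'(y)}\qquad\text{for all }(y,z),
\]
where $\tilde v_\ell$ is the loss representative in the conclusion. Separating variables, the indispensable fact is $\phi=u_g'/k$ for a constant $k>0$, and your proposal never establishes it.

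Instead you do three things that fail:
\begin{enumerate}
\item[(i)] You import the reference-point first-order condition \eqref{eq:Lambda_Range_Dependent_NoSign}. It is not among (A1)--(A6), and used this way it simply asserts the conclusion.
\item[(ii)] You fix $y=y_0$. This produces a $v_g$ that depends on $y_0$, so the loss term matches only at $y=y_0$ and there is no single representation valid across loss outcomes.
\item[(iii)] You redefine $v_g$ as an antiderivative of $\phi(z)v_\ell'(y_0)$. This contradicts your own choice $v_g=u_g$ unless $u_g'\propto\phi$, which is precisely the missing fact.
\end{enumerate}
Checking that $v_g$ gives ``the same gain-side ranking'' as $u_g$ cannot fill this gap, because the computation above pins $v_g$ to a multiple of $u_g$. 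The concavity step also fails. Step~1 yields only $(\phi/u_g)'<0$, i.e.\ $\phi'/\phi<u_g'/u_g$, which is compatible with $\phi'>0$, and $u_g''<0$ does not change this. So an antiderivative of $\phi$ need not be concave.

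The paper obtains the proportionality from (A3), not from the first-order condition. It keeps $v_g:=u_g$, so strict concavity is inherited from (A2) and no sign of $\phi'$ is needed. It then argues from invariance of the marginal rate of substitution $\phi(z)v_\ell'(y)/v_g'(z)$ that $\phi=v_g'/k$. Finally it absorbs $k$ on the \emph{loss} side, choosing a new representative $\tilde v_\ell$ with $\tilde v_\ell/\tilde v_\ell'=v_\ell/k$, namely $\tilde v_\ell=C\exp\bigl(k\int dy/v_\ell\bigr)$, which is positive and $C^2$ when $v_\ell>0$.

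This removes the obstacle you flagged without fixing $y$. In the target form the $y$-dependence sits in the product $\tilde v_\ell/\tilde v_\ell'$, not in the factor $1/v_\ell'(y)$ alone. So what is needed is that $\phi(z)/v_g'(z)$ be constant, not that $\phi(z)=v_g'(z)/v_\ell'(y)$ with the original $v_\ell$.

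The same misreading affects your converse. $\phi:=v_g'/v_\ell'(y)$ is not a function of $z$ alone, so (A6) does not hold by construction. The correct split is $\phi=v_g'$ with loss factor $v_\ell/v_\ell'$. Then (A2) is not simply inherited either, since $\partial^2u_\ell/\partial y^2=-v_g'(z)\,(v_\ell/v_\ell')''(y)$.
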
	
	
	\begin{proof}
		Fix $V \in \mathcal C$. We show that its primitives must satisfy the derivative-ratio restriction.
		
		\textbf{Step 1: Switching condition.} Indifference $V(p^*,y,z)=0$ implies
		\begin{equation}
			p^* u_g(z) = -(1-p^*) u_\ell(y,z).
		\end{equation}
		Define $\Phi(y,z) := -u_\ell(y,z)/u_g(z)$. Then
		\begin{equation}
			p^*(y,z) = \frac{\Phi(y,z)}{1+\Phi(y,z)}.
			\label{eq:phi}
		\end{equation}
		
		\textbf{Step 2: Dispersion monotonicity.} Differentiating \eqref{eq:phi} with respect to $z$ and applying \eqref{eq:DispersionMonotonicity} yields
		\begin{equation}
			\frac{\partial \Phi}{\partial z} < 0.
			\label{eq:phiz}
		\end{equation}
		
		\textbf{Step 3: Loss-factorization.} By the defining property (A6) of the class $\mathcal C$, the loss-side term admits the representation
		\begin{equation}
			\Phi(y,z) = \phi(z)v_\ell(y)/u_g(z),
			\label{eq:separated}
		\end{equation}
		for some positive scalar penalty $\phi$ and strictly concave loss subutility $v_\ell$. Let $u_g(z)=v_g(z)$.
		
		\textbf{Step 4: Attenuation and monotonicity.} From \eqref{eq:RabinAttenuation}, $\phi(z) \to 0$ as $z \to \infty$. From \eqref{eq:phiz}, $\phi'(z) < 0$.
		
		\textbf{Step 5: Affine invariance determines scaling.}
		The marginal rate of substitution is
		\begin{equation}
			\frac{-\partial u_\ell/\partial y}{u_g'(z)} 
			= 
			\frac{\phi(z) v_\ell'(y)}{v_g'(z)}.
		\end{equation}
		Under the transformation $x\mapsto ax+b$ with $a>0$, marginal utilities scale by the factor $a$, so the marginal rate of substitution must remain invariant to common positive scaling of outcomes. Hence the ratio $\phi(z)/v_g'(z)$ must itself be invariant to such transformations. Since this invariance must hold for all $a>0$, the ratio must be constant. Therefore there exists $k>0$ such that $\phi(z)=\frac{v_g'(z)}{k}.$
		
		\tab Substituting into \eqref{eq:separated} yields
		$u_\ell(y,z) = -\frac{v_g'(z)}{k}v_\ell(y).$
		Equivalently, after the normalization implied by the reference-point first-order condition, choose an admissible loss representative $\tilde v_\ell$ satisfying $\tilde v_\ell(y)/\tilde v_\ell'(y)=v_\ell(y)/k$ on the relevant loss interval. This first-order ordinary differential equation has a positive $C^2$ solution whenever $v_\ell(y)>0$ on the interval. Hence
		\begin{equation}
			u_\ell(y,z)=-\frac{v_g'(z)}{\tilde v_\ell'(y)}\tilde v_\ell(y),
		\end{equation}
		which is the derivative-ratio WRDU form. Renaming $\tilde v_\ell$ as $v_\ell$ gives the displayed representation.
	\end{proof}
	
	\begin{prop}[Impossibility for Separable State-Dependent Utility]\label{prop:SDUImpossibility}
		Suppose preferences over binary lotteries admit a twice continuously differentiable separable state-dependent utility representation
		\begin{equation}
			V(p,y,z)=p\,u_g(z)+(1-p)\,u_\ell(y),
		\end{equation}
		where $u_g$ and $u_\ell$ satisfy monotonicity and diminishing sensitivity. 
		
		\tab Then it is impossible for the representation to satisfy both:
		
		\begin{enumerate}
			\item Dispersion monotonicity: $\frac{\partial p^*(y,z)}{\partial z}<0$ for all $y<z$,
			\item Rabin attenuation: $\lim_{z\to\infty}\left|\frac{u_\ell(y)}{u_g(z)}\right|=0$ for each fixed $y$,
		\end{enumerate}
		
		\tab unless $u_\ell$ is trivial (i.e., affine equivalent to zero).
	\end{prop}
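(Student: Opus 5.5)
The plan is to reduce both conditions to statements about the single ratio $\Phi(y,z):=-u_\ell(y)/u_g(z)$, as in Step 1 of the proof of \Cref{thm:StructuralUniqueness}, and then use the compactness convention of \cref{sec:Preliminaries} to show that attenuation forces $u_\ell\equiv 0$.

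\emph{Step 1: separability makes dispersion monotonicity automatic.} Under separability the switching condition $V(p^*,y,z)=0$ gives $p^*(y,z)=\Phi(y,z)/(1+\Phi(y,z))$. For an interior switching probability $p^*\in(0,1)$ we need $\Phi>0$, so $u_g(z)$ and $u_\ell(y)$ have opposite signs. After the positive affine normalization permitted by (A3), we take $u_g>0$ on the gain region and $u_\ell<0$ on the loss region. Since $\partial p^*/\partial z=(\partial\Phi/\partial z)/(1+\Phi)^2$, direct differentiation gives
\begin{equation*}
\frac{\partial\Phi}{\partial z}(y,z)=\frac{u_\ell(y)\,u_g'(z)}{u_g(z)^2}.
\end{equation*}
This is strictly negative whenever $u_\ell(y)<0$, using $u_g'>0$ from monotonicity. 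So in the separable case dispersion monotonicity holds exactly when $u_\ell(y)\neq 0$. This is the condition the attenuation step will contradict.

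\emph{Step 2: attenuation fails on a compact prize space.} By the convention of \cref{sec:Preliminaries}, $X=[w,b]$ is compact and $z\to\infty$ means $z\to b^-$. A $C^2$ function that is increasing and strictly concave on $[w,b]$ is continuous up to $b$, so $u_g(z)\to u_g(b)$, which is finite and positive. Hence
\begin{equation*}
\lim_{z\to b^-}\left|\frac{u_\ell(y)}{u_g(z)}\right|=\frac{|u_\ell(y)|}{u_g(b)},
\end{equation*}
because in the separable case $u_\ell$ does not depend on $z$. This limit vanishes for every fixed $y$ only if $u_\ell(y)=0$ for all $y$, that is, only if $u_\ell$ is trivial. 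But triviality kills Step 1, since then $\partial p^*/\partial z=0$. It also violates the monotonicity requirement $-u_\ell'>0$. So the two conditions cannot hold together unless $u_\ell$ is trivial, which is the claim. A short closing remark should contrast this with (A6): there the factor $\phi(z)$ lets the loss term vanish relative to $u_g$ without $u_g$ blowing up, and separability removes exactly that channel.

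\emph{Main obstacle.} The delicate point is Step 2's reliance on boundedness of $u_g$. If the prize space were unbounded, a concave $u_g$ with $u_g\to\infty$ (e.g.\ logarithmic) would satisfy attenuation, and Step 1 would then give dispersion monotonicity as well, so the proposition would be false. I would therefore state explicitly that the argument uses the compact-support convention, and check that continuity of $u_g$ at $b$ (ruling out a singular endpoint) is included in the $C^2$ hypothesis on the closed interval. If the intended reading allowed $u_g$ to be unbounded near $b$, I would instead have to add a boundedness assumption on $u_g$ to the proposition.
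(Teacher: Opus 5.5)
Your argument is correct under the compact-support convention of \cref{sec:Preliminaries}, but it puts the contradiction in the opposite place from the paper.

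The paper puts the contradiction in dispersion monotonicity. It asserts that monotonicity gives $u_\ell(y)>0$ on the loss region, so $\partial\phi/\partial z=u_\ell(y)u_g'(z)/u_g(z)^2>0$ and hence $\partial p^*/\partial z>0$. Attenuation appears only in the trivial case $u_\ell\equiv 0$, and compactness is never used. You instead show that dispersion monotonicity holds automatically for nontrivial separable SDU, and you get the contradiction from attenuation, via boundedness of $u_g$ as $z\to b^-$.

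Your sign choice is the defensible one, for three reasons:
\begin{itemize}
\item (A1) restricts the slope of $u_\ell$, not its sign.
\item If $u_g>0$ and $u_\ell>0$, then $V(p,y,z)>0$ for every $p\in[0,1]$, so the $p^*$ the paper differentiates is not a probability.
\item The paper's own WRDU form $u_\ell=-\phi(z)v_\ell(y)$ with $v_\ell>0$ has $u_\ell<0$, and then the paper's displayed derivative is negative.
\end{itemize}
If the paper's sign step held, its route would buy independence from the bounded prize space. Since the step fails, the dependence on $X=[w,b]$ that you flag is genuine. Your logarithmic example shows the statement fails on unbounded support; a concrete pair is $u_g(z)=\log(1+z)$ with $u_\ell(y)=-e^{y}$.

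Two refinements.

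First, Step 2 alone proves the statement. Attenuation forces $u_\ell\equiv 0$, which already violates $-u_\ell'>0$ and gives $p^*\equiv 0$, so $\partial p^*/\partial z=0$. Step 1 is therefore explanatory rather than logically required. Its sign convention should be justified by requiring $p^*\in(0,1)$, not by (A3). (A3) is an outcome-space invariance, and a utility translation would move the zero level that defines $p^*$.

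Second, you do not need continuity of $u_g$ at $b$. Concavity alone bounds $u_g$ above on a bounded interval, since $u_g(z)\le u_g(z_0)+u_g'(z_0)(b-z_0)$. By monotonicity, $u_g(z)\to L\in\mathbb{R}$ as $z\to b^-$. If $L\neq 0$, the ratio tends to $|u_\ell(y)|/|L|$. If $L=0$, it diverges unless $u_\ell(y)=0$. Either way attenuation forces $u_\ell(y)=0$, with no sign normalization and no extra boundedness assumption.
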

	
	\begin{proof}
		Under separability,
		\begin{equation}
			p^*(y,z)=\frac{-u_\ell(y)}{u_g(z)-u_\ell(y)}.
		\end{equation}
		Define $\phi(y,z)=-u_\ell(y)/u_g(z)$. Then
		\begin{equation}
			p^*(y,z)=\frac{\phi(y,z)}{1+\phi(y,z)}.
		\end{equation}
		Since $u_\ell$ is independent of $z$,
		\begin{equation}
			\frac{\partial \phi}{\partial z}
			=
			\frac{u_\ell(y)u_g'(z)}{u_g(z)^2}.
		\end{equation}
		Monotonicity implies $u_g'(z)>0$ and $u_\ell(y)>0$ (in loss region), so
		\begin{equation}
			\frac{\partial \phi}{\partial z}>0.
		\end{equation}
		Hence
		\begin{equation}
			\frac{\partial p^*}{\partial z}>0,
		\end{equation}
		which contradicts dispersion monotonicity.
		
		\tab Alternatively, if $u_\ell(y)\equiv 0$, then Rabin attenuation holds trivially but loss aversion disappears. Thus no nontrivial separable SDU representation can satisfy both properties.
	\end{proof}
	
	\subsection{Discussion of Assumptions}
	
	\tab Each assumption in \Cref{thm:StructuralUniqueness} has a behavioral interpretation:
	
	\begin{enumerate}
		\item \textbf{Monotonicity:} Gain utility increases in the gain outcome, while loss disutility increases in loss severity. This is standard.
		\item \textbf{Diminishing sensitivity:} Concavity in both domains. This is standard.
		\item \textbf{Affine admissibility:} Rankings are preserved under positive affine transformations of the prize scale. This is an outcome-space restriction; utility-space translations are handled separately by the normalization.
		\item \textbf{Loss-factorization:} The gain-side outcome scales the loss penalty without changing the ordinal shape of loss utility.
		\item \textbf{Dispersion monotonicity:} The switching probability decreases as the gain increases. This is behaviorally testable through multiple price lists.
		\item \textbf{Rabin attenuation:} The loss penalty vanishes as stakes grow large. This is the minimal condition for blocking the Rabin calibration implication.
	\end{enumerate}
	
	\tab The assumptions are restrictive by design: dropping any one permits other functional forms. For example, without loss-factorization the comparison class allows arbitrary interactions between $y$ and $z$; without dispersion monotonicity, $\phi(z)$ need not be decreasing; without Rabin attenuation, the large-stakes loss-penalty limit is not obtained.
	
	\subsection{Comparison with State-Dependent Utility}
	
	\begin{table}[htbp]
		\centering
		\caption{WRDU vs. State-Dependent Utility}
		\label{tab:SDU_Comparison}
		\small
		\begin{tabular}{lcc}
			\toprule
			Feature & State-Dependent Utility & WRDU \\
			\midrule
			Utility Function & $u(y,z) = u(y)$ & $u_\ell(y,z) = -\frac{v_g'(z)}{v_\ell'(y)}v_\ell(y)$ \\
			Separability & Additive & Multiplicative (derivative-ratio) \\
			Loss Aversion & None & Endogenous ($\lambda = v_\ell'(y)/v_g'(z)$) \\
			Range Dependence & No & Yes \\
			Rabin implication & No & Attenuated \\
			Affine Invariance & Yes & Yes \\
			Dispersion Monotonicity & Not guaranteed & Guaranteed \\
			\bottomrule
		\end{tabular}
	\end{table}
	
	\tab The key structural difference is the derivative-ratio form. Standard SDU, which assumes $u_\ell(y,z) = u_\ell(y)$, cannot satisfy dispersion monotonicity and Rabin attenuation simultaneously.
	
	\section{Comparative Statics and Distinguishing Predictions}\label{sec:ComparativeStatics}
	
	\tab This section derives comparative statics under WRDU and contrasts them with CPT and K\"{o}szegi-Rabin. All proofs rely only on the WRDU axioms and the properties of $v_g$ and $v_\ell$ established in the Representation Theorem. \textbf{No parametric assumptions are used.}
	
	\subsection{Certainty Equivalents Under WRDU}
	
	\tab Consider a binary lottery $L = (p,z;1-p,y)$ with $y < x_r < z$. Under WRDU, the virtual evaluation is:
	\begin{equation}
		V(L) = p v_g(z) - \frac{1}{\lambda(x_r;y,z)} (1-p) v_\ell(y).
		\label{eq:BinaryEval}
	\end{equation}
	
	\begin{prop}[Range Sensitivity of the Loss Penalty]\label{prop:RangeSensitivity}
		Under the WRDU axioms,
		\begin{equation}
			\frac{\partial \lambda}{\partial z} > 0, \qquad \frac{\partial \lambda}{\partial y} < 0.
		\end{equation}
		Equivalently, the effective loss penalty $1/\lambda(x_r;y,z)$ decreases as the gain outcome rises or the loss outcome moves farther into the loss region.
	\end{prop}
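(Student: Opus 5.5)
The plan is to work directly from the derivative-ratio form of the index, $\lambda(x_r;y,z)=v_\ell'(y)/v_g'(z)$, obtained from the first-order condition in \eqref{eq:Lambda_Range_Dependent_NoSign_Final}. I will use the regularity of the normalized representatives fixed in \Cref{thm:WRDU_Representation} and \Cref{rem:DebreuNormalization}: $v_g,v_\ell$ are $C^2$, strictly increasing ($v_g'>0$, $v_\ell'>0$ on the relevant open intervals), and strictly concave with $v_g''<0$, $v_\ell''<0$. Because $y$ and $z$ enter $\lambda$ through separate factors, the two partial derivatives can be computed independently by the quotient rule, holding $x_r$ fixed.

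\textbf{Gain side.} Differentiating in $z$ gives
\begin{equation*}
\frac{\partial \lambda}{\partial z}=-\frac{v_\ell'(y)\,v_g''(z)}{v_g'(z)^2}.
\end{equation*}
Here $v_\ell'(y)>0$ and $v_g'(z)^2>0$, while $v_g''(z)<0$. So the expression is strictly positive.

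\textbf{Loss side.} Differentiating in $y$ gives
\begin{equation*}
\frac{\partial \lambda}{\partial y}=\frac{v_\ell''(y)}{v_g'(z)}.
\end{equation*}
This is strictly negative, since $v_\ell''<0$ and $v_g'>0$. Both signs agree with items 1--2 of \Cref{axiom:RangeDep}, as the Remark on range dependence already anticipates.

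\textbf{The ``equivalently'' clause.} For the penalty, write $\partial(1/\lambda)/\partial s=-\lambda^{-2}\,\partial\lambda/\partial s$ for $s\in\{y,z\}$. Since $\lambda>0$, $1/\lambda$ is strictly decreasing in $z$ and strictly increasing in $y$. Moving the loss outcome farther into the loss region means decreasing $y$, so the penalty decreases in that direction too. Sending $z\to b^-$ or $y\to 0^+$ and invoking the Inada-type boundary conditions of \Cref{lem:ConcaveDerivativeBoundary} then recovers the tail statement, although the proposition only requires the local signs.

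\textbf{Main obstacle.} The delicate step is not the calculus but justifying that $\lambda(x_r;y,z)$ may be differentiated in $(y,z)$ with $x_r$ held fixed. The reference point is itself an argmax, \eqref{eq:RefPoint_Maximizer_Optimality}, and could in principle move with $(y,z)$. I would handle this in two steps:
\begin{itemize}
\item First, read the proposition as a statement about the index function at a given reference partition, which is how \Cref{axiom:RangeDep} is phrased.
\item Second, note that if $x_r$ is allowed to vary, the envelope theorem applied to the maximization in \eqref{eq:RefPoint_Maximizer_Optimality} kills the first-order effect of $dx_r$ on the optimized value, which is what \Cref{thm:PenalizedOptimization} invokes.
\end{itemize}
The direct dependence of $\lambda$ on $(y,z)$ through $v_\ell'(y)$ and $v_g'(z)$ is then the only channel. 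A secondary care point is sign convention: $v_\ell$ is the increasing disutility representative, so $v_\ell'>0$ must be used rather than the negative left derivative of a conventional value function.
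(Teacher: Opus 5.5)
Your argument is correct, but it takes a different route from the paper's. The paper's proof is a citation: items 1--2 of the Range Dependence axiom, one of the six WRDU axioms, state $\partial\lambda/\partial z>0$ and $\partial\lambda/\partial y<0$ verbatim. The paper then attributes the claim about $1/\lambda$, somewhat loosely, to the first-order condition and the envelope theorem.

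You instead derive the signs from $\lambda=v_\ell'(y)/v_g'(z)$ by the quotient rule. That is the reasoning the paper only sketches in its remark on range dependence and boundary conditions. Your route adds content: it shows the monotonicity follows from curvature once the derivative-ratio form is adopted, rather than standing as an independent primitive. Your treatment of the penalty via $\partial(1/\lambda)/\partial s=-\lambda^{-2}\,\partial\lambda/\partial s$ is also cleaner than the paper's envelope-theorem appeal.

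The cost is two hypotheses that are not among the six axioms and that the Representation Theorem does not supply:
\begin{itemize}
\item Pointwise $v_g''<0$ and $v_\ell''<0$. This is stronger than strict concavity, which allows $v_g''(z)=0$ at isolated points and hence $\partial\lambda/\partial z=0$ there.
\item The derivative-ratio formula at $(y,z)\neq(x_r,x_r)$. The first-order condition does not deliver this; the paper simply posits it ``more generally.''
\end{itemize}
The paper's citation needs neither, at the price of merely restating the axiom.

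One correction: your second bullet misapplies the envelope theorem. It governs the derivative of the maximized value, not of a quantity such as $\lambda$ evaluated at the maximizer. The right observation is that $x_r$ does not enter $v_\ell'(y)/v_g'(z)$ at all, so your first bullet already settles the issue.
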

	
	\begin{proof}
		From \Cref{axiom:RangeDep}, monotonicity in the gain outcome gives $\partial \lambda/\partial z > 0$, and monotonicity in the loss outcome gives $\partial \lambda/\partial y < 0$. The virtual loss aversion index $\lambda(x_r;y,z)$ is defined by the first-order condition from \eqref{eq:RefPoint_Maximizer}. Applying the envelope theorem, these monotonicity properties imply the effective penalty $1/\lambda$ decreases with outcome dispersion. 
	\end{proof}
	
	\begin{cor}[Certainty Equivalents and Risk Premia]\label{cor:CERiskPremia}
		Let $\{L(R)\}_{R>0}$ be a differentiable family of binary lotteries with prize range $R=z(R)-y(R)$ and WRDU value $V(R)$ given by \eqref{eq:BinaryEval}. Suppose the certainty equivalent $c(R)$ is interior and satisfies
		\begin{equation}
			v_g(c(R))=V(R),
		\end{equation}
		and define the risk premium by $\pi(R)=\mathbb E[L(R)]-c(R)$. Then
		\begin{equation}
			c'(R)=\frac{V'(R)}{v_g'(c(R))},
			\qquad
			\pi'(R)=\frac{d}{dR}\mathbb E[L(R)]-\frac{V'(R)}{v_g'(c(R))}.
		\end{equation}
		Thus certainty equivalents and risk premia inherit WRDU range dependence whenever the range effect on $V(R)$ is not exactly offset by the direct change in prizes.
	\end{cor}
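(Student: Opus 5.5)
The corollary is essentially an implicit-function and chain-rule computation applied to the defining identity $v_g(c(R))=V(R)$. The plan is to differentiate both sides in $R$, use strict monotonicity of $v_g$ to solve for $c'(R)$, and then differentiate the definition of the risk premium.

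\textbf{Step 1: regularity of $V(R)$.} Write
\[
V(R)=p(R)\,v_g(z(R))-\frac{(1-p(R))\,v_\ell(y(R))}{\lambda(x_r;y(R),z(R))},
\]
allowing $p$ to depend on $R$ if the family does; otherwise take it constant. Differentiability of $V$ in $R$ comes from the differentiability of the family $\{L(R)\}$, $v_g,v_\ell\in C^2$, and the differentiability of $\lambda$ in $(y,z)$ implicit in \Cref{axiom:RangeDep}. One point needs care: along the family the reference partition should stay fixed, with $y(R)<x_r<z(R)$. If $x_r$ itself moves with $R$, its contribution vanishes at an interior optimum by the envelope argument used in \Cref{prop:RangeSensitivity}.

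\textbf{Step 2: $c'(R)$.} Since $c(R)$ is interior and $v_g'>0$ there (by (A1)), the map $c\mapsto v_g(c)$ is a $C^1$ diffeomorphism near $c(R)$. Hence $c(R)=v_g^{-1}(V(R))$ is differentiable, either by the implicit function theorem applied to $G(c,R)=v_g(c)-V(R)$ with $\partial_c G=v_g'(c)\neq0$, or by the inverse function theorem. Differentiating $v_g(c(R))=V(R)$ gives $v_g'(c(R))\,c'(R)=V'(R)$, which yields the stated formula for $c'(R)$.

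\textbf{Step 3: $\pi'(R)$.} The expected value $\mathbb E[L(R)]=p(R)z(R)+(1-p(R))y(R)$ is differentiable as a finite sum of differentiable terms. Differentiating $\pi(R)=\mathbb E[L(R)]-c(R)$ and substituting Step 2 gives the expression for $\pi'(R)$.

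\textbf{Step 4: range-dependence claim.} Expand $V'(R)$ by the chain rule into two parts. The first is a direct prize effect through $v_g'(z)z'(R)$ and $v_\ell'(y)y'(R)$. The second is a penalty effect,
\[
(1-p)\,v_\ell(y)\,\lambda^{-2}\bigl(\lambda_z z'(R)+\lambda_y y'(R)\bigr),
\]
which is signed by \Cref{prop:RangeSensitivity}: it is positive when $z'(R)>0$ and $y'(R)\le 0$. So $c'(R)$ and $\pi'(R)$ carry this penalty term unless it is exactly cancelled by the direct prize terms, and that is precisely the qualitative conclusion of the corollary.

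The main obstacle is Step 1, specifically justifying that the reference point and partition are stable along the family so that $V(R)$ is genuinely differentiable. I would state this as a maintained hypothesis, consistent with the phrase ``differentiable family,'' rather than derive it.
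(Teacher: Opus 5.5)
Your Steps 2 and 3 are exactly the paper's proof: differentiate $v_g(c(R))=V(R)$, divide by $v_g'(c(R))>0$, and then differentiate $\pi(R)=\mathbb E[L(R)]-c(R)$. Steps 1 and 4 go beyond the paper, which simply takes $V'(R)$ as given and treats the final sentence as interpretation. You can drop the envelope-theorem aside in Step 1 and rely on your maintained hypothesis instead: $x_r$ maximizes the unweighted penalized functional, not $V(L)$, so the envelope theorem does not directly eliminate the $x_r$-channel.
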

	
	\begin{proof}
		Differentiating the defining certainty-equivalent equation gives $v_g'(c(R))c'(R)=V'(R)$. Since $v_g'(c(R))>0$, the expression for $c'(R)$ follows. Differentiating $\pi(R)=\mathbb E[L(R)]-c(R)$ gives the second expression.
	\end{proof}
	\begin{rem}[Connection to First- and Second-Order Risk Aversion]\label{rem:WRDU_vs_Segal_Spivak}
		The risk-premium representation in \Cref{cor:CERiskPremia} connects WRDU to the distinction between first- and second-order risk aversion in \citet{SegalSpivak1990}. For small mean-zero gambles around the reference point, WRDU is first-order risk averse whenever the left and right marginal utilities at the reference point differ, $v_\ell'(x_r)\neq v_g'(x_r)$. In that case the value function has a kink at $x_r$, and the risk premium is locally of first order in the scale of the gamble. If the marginal utilities match at the reference point, the kink vanishes and the local risk premium is governed by curvature, yielding the usual second-order behavior.
		
		\tab Unlike fixed-kink models, however, the WRDU first-order term is range-dependent. As the prize range expands and $\lambda(x_r;y,z)$ rises, the effective loss penalty $1/\lambda(x_r;y,z)$ attenuates. Thus WRDU preserves the Segal-Spivak first-order mechanism for small stakes while preventing that local kink from being extrapolated mechanically to large-stakes gambles.
	\end{rem}
	
	\subsection{Multiple Price List Predictions}
	
	\tab In a \citet{HoltLaury2002} MPL, switching occurs when $V(A) = V(B)$. Because $\lambda$ depends on the prize range, altering outcome dispersion shifts the switching probability.
	
	\begin{prop}[Dispersion-Dependent Switching]\label{prop:DispersionDependentSwitch}
		Let $R = z_B - y_B > 0$ denote the range of the risky lottery $B$ in the Holt-Laury MPL. Suppose the switching equation is regular, $\partial (EV_A-EV_B)/\partial p\neq0$, and the range effect on $EV_A-EV_B$ is not exactly offset by direct prize effects. Under WRDU, the switching probability $p^\star(R)$ satisfies:
		\begin{equation}
			\frac{\partial p^\star}{\partial R} = -\frac{\partial (EV_A - EV_B)/\partial R}{\partial (EV_A - EV_B)/\partial p} \neq 0.
		\end{equation}
	\end{prop}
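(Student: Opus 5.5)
The plan is an implicit-function-theorem argument on the switching equation. Define the switching function
\[
D(p,R) := V(A;p,R) - V(B;p,R),
\]
where $V$ is the WRDU evaluation \eqref{eq:BinaryEval} applied row by row in the Holt--Laury list. The statement writes this difference as $EV_A-EV_B$, and I will read $EV$ throughout as the WRDU virtual evaluation. The switching probability $p^\star(R)$ is defined by $D(p^\star(R),R)=0$.

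First I will check that $D$ is $C^1$ in $(p,R)$. It is affine in $p$ because objective probabilities are preserved. It is $C^1$ in $R$ because $v_g,v_\ell$ are $C^2$ normalized representatives (\Cref{rem:DebreuNormalization}) and $\lambda(x_r;y,z)=v_\ell'(y)/v_g'(z)$ is then $C^1$ in the prizes. I parametrize $B$ by its range through a differentiable path $(y_B(R),z_B(R))$ with $z_B-y_B=R$.

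Second, the regularity hypothesis $\partial D/\partial p\neq 0$ lets me apply the implicit function theorem. This gives a locally unique $C^1$ solution $p^\star(R)$, and differentiating $D(p^\star(R),R)\equiv 0$ yields the displayed formula
\[
\frac{\partial p^\star}{\partial R}=-\frac{\partial D/\partial R}{\partial D/\partial p}.
\]
The formula itself is routine; the content lies in showing the numerator is nonzero.

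Third, I expand $\partial D/\partial R$. Lottery $A$ does not depend on $R$, so the numerator comes only from $B$:
\[
-\frac{\partial V(B)}{\partial R} = -\Bigl[p\,v_g'(z_B)z_B' - (1-p)\frac{v_\ell'(y_B)y_B'}{\lambda} + (1-p)\frac{v_\ell(y_B)}{\lambda^2}\Bigl(\frac{\partial\lambda}{\partial z}z_B'+\frac{\partial\lambda}{\partial y}y_B'\Bigr)\Bigr].
\]
The first two terms are the direct prize effects. The last term is the range effect through the penalty. By \Cref{prop:RangeSensitivity}, $\partial\lambda/\partial z>0$ and $\partial\lambda/\partial y<0$. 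Along any range-expanding path ($z_B'\ge 0\ge y_B'$, not both zero), the bracketed $\lambda$-derivative is therefore strictly positive, so the range effect is strictly nonzero whenever $v_\ell(y_B)>0$ and $p<1$.

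The main obstacle is ruling out exact cancellation between this range effect and the direct prize effects. Nothing in the WRDU axioms prevents it at an isolated $R$. I therefore will not try to derive non-cancellation. Instead I will invoke the stated hypothesis that the range effect is not exactly offset by direct prize effects, which gives exactly $\partial D/\partial R\neq 0$. Combined with $\partial D/\partial p\neq 0$, this yields $\partial p^\star/\partial R\neq 0$.

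As a sanity check, I will contrast this with a fixed-$\lambda$ model. There the $\lambda$-derivative term vanishes, so any dispersion effect would be a pure prize effect, in line with \Cref{prop:SDUImpossibility}.
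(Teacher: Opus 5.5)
Your proposal is correct and follows the same route as the paper. Both arguments apply the implicit function theorem to the switching equation $V(A)-V(B)=0$ under the regularity hypothesis, use \Cref{prop:RangeSensitivity} to see that $V(B)$ depends on $R$ through the penalty $1/\lambda(x_r;y_B,z_B)$, and take the nonzero numerator directly from the stated no-offset hypothesis rather than deriving it; your explicit expansion of $\partial V(B)/\partial R$ and the sign argument for the penalty channel along range-expanding paths spell out what the paper's short proof only asserts.
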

	
	\begin{proof}
		By \Cref{prop:RangeSensitivity}, changing the range $R$ changes $\lambda(x_r;y_B,z_B)$. Since $EV_B$ depends on $\lambda$ through the loss penalty term, the derivative $\partial EV_B/\partial R$ contains a WRDU-specific penalty channel. Under the stated no-offset condition, $\partial (EV_A-EV_B)/\partial R\neq0$. The implicit function theorem gives $\partial p^\star/\partial R \neq 0$. 
	\end{proof}
	
	\subsection{Admissible-Class Attenuation of the Rabin Calibration Implication}
	
	\begin{lem}[Range-Dependence of Virtual Loss Aversion Index]\label{lem:RangeDependence}
		Under the WRDU axioms,
		\begin{equation}
			\frac{\partial \lambda}{\partial z} > 0, \qquad \frac{\partial \lambda}{\partial y} < 0.
		\end{equation}
		Moreover, $\lambda(x_r; y, z) \to \infty$ as $z \to \infty$ or $y \to 0^+$.
	\end{lem}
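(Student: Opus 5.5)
I will work directly with the derivative-ratio form of the index established in \cref{sec:LossAversionIndex}, namely \eqref{eq:Lambda_Range_Dependent_NoSign_Final},
\[
\lambda(x_r;y,z)=\frac{v_\ell'(y)}{v_g'(z)},
\]
rather than citing \Cref{axiom:RangeDep} directly. Citing the axiom would make the lemma tautological. Computing from the representation instead shows that the axiom is consistent with the structure delivered by the first-order condition \eqref{eq:FOC_General}. Throughout, $v_g$ and $v_\ell$ are the normalized $C^2$ representatives of \Cref{rem:DebreuNormalization}. They have positive marginal utilities by the monotonicity in \Cref{thm:WRDU_Representation}, and they are strictly concave with $v_g''<0$ and $v_\ell''<0$.

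\textbf{Step 1 (gain-side monotonicity).} Fix $x_r$ and $y$. Since the numerator does not depend on $z$,
\[
\frac{\partial\lambda}{\partial z}=-\frac{v_\ell'(y)\,v_g''(z)}{v_g'(z)^2}.
\]
This is strictly positive because $v_\ell'(y)>0$ and $v_g''(z)<0$.

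\textbf{Step 2 (loss-side monotonicity).} Fix $x_r$ and $z$. Then
\[
\frac{\partial\lambda}{\partial y}=\frac{v_\ell''(y)}{v_g'(z)},
\]
which is strictly negative because $v_\ell''<0$ and $v_g'>0$. I will add a sentence on orientation: a larger $y$ means a loss closer to $x_r$, so $\partial\lambda/\partial y<0$ says the index rises as the loss moves farther into the tail. This matches the wording of \Cref{prop:RangeSensitivity}.

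\textbf{Step 3 (limits).} Here I will invoke the Inada-type boundary conditions from the remark on Range Dependence and Boundary Conditions, $\lim_{z\to b^-}v_g'(z)=0$ and $\lim_{y\to0^+}v_\ell'(y)=+\infty$. By \Cref{lem:ConcaveDerivativeBoundary}, these conditions survive positive rescaling, so they are well defined on the normalized class. For fixed $y$, $v_\ell'(y)$ is a positive constant and the denominator tends to $0^+$, so $\lambda\to\infty$ as $z\to b^-$; by the convention of \cref{sec:Preliminaries}, this is what $z\to\infty$ means. For fixed $z$, $v_g'(z)$ is a positive finite constant and the numerator diverges, so $\lambda\to\infty$ as $y\to0^+$.

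\textbf{Main obstacle.} The difficulty is not the calculus but justifying that the boundary conditions are available. As the remark itself notes, concavity alone does not imply them. I will therefore state explicitly that the limit part of the lemma holds on the admissible class in which these endpoint conditions are imposed as part of \Cref{axiom:RangeDep}. The monotonicity part needs only strict concavity and positive marginal utility.

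A secondary subtlety is that the loss side $C_\ell(X)$ is bounded below by $w\ge0$, so the limit $y\to0^+$ makes sense only when $w=0$ or after the reference-anchored reparametrization. I will note this restriction rather than resolve it.
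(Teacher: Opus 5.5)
Your argument is correct. For the limits it coincides with the paper's proof, which also reads them off the derivative-ratio representation together with the imposed endpoint restrictions $v_g'(z)\to 0$ and $v_\ell'(y)\to\infty$.

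For the monotonicity part the routes differ. The paper's proof simply invokes clauses 1--2 of \Cref{axiom:RangeDep}, so the lemma is essentially a restatement of the axiom. You instead compute $\partial\lambda/\partial z=-v_\ell'(y)v_g''(z)/v_g'(z)^2>0$ and $\partial\lambda/\partial y=v_\ell''(y)/v_g'(z)<0$ from $\lambda=v_\ell'(y)/v_g'(z)$. The paper only states this calculation in words, in its remark on range dependence and boundary conditions. Your route buys non-circularity: it shows that the first three clauses of the axiom follow from the derivative-ratio index once strict concavity and the endpoint conditions are granted.

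The price is that these hypotheses are not among the six axioms, so you should attribute them accurately.
\begin{itemize}
\item Positive marginal utility and strict concavity come from the admissible-class restrictions (that remark, or (A1), (A2), (A6) of \Cref{thm:StructuralUniqueness}). They do not come from \Cref{thm:WRDU_Representation}, which states no monotonicity.
\item The first-order condition \eqref{eq:FOC_General} only delivers $\lambda(x_r)=v_\ell'(x_r)/v_g'(x_r)$ at the reference point. The off-reference form \eqref{eq:Lambda_Range_Dependent_NoSign_Final} is posited by the paper as the range-dependent extension, not derived. So your framing that the computation confirms ``the structure delivered by the first-order condition'' should be softened to ``the structure posited by the derivative-ratio index.''
\end{itemize}
Your caveat that $y\to 0^+$ only makes sense when $w=0$ is apt.
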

	
	\begin{proof}
		The monotonicity properties follow directly from \Cref{axiom:RangeDep}. For the tail properties, \Cref{axiom:RangeDep} states that $\lim_{z \to \infty} \lambda(x_r; y, z) = \infty$ and $\lim_{y \to 0^+} \lambda(x_r; y, z) = \infty$. In the derivative-ratio representation, these limits follow from the admissible endpoint restrictions $v_g'(z)\to 0$ at the upper boundary and $v_\ell'(y)\to\infty$ at the lower boundary.
	\end{proof}
	
	\begin{thm}[Admissible-Class Attenuation of the Rabin Calibration Implication]\label{thm:AbsRabin}
		Let $\succeq$ be a preference relation satisfying the WRDU axioms. Suppose the agent rejects the symmetric gamble $L_\epsilon = (1/2, x_r+\epsilon;\; 1/2, x_r-\epsilon)$ for all sufficiently small $\epsilon > 0$. Then it does \textbf{not} follow that the agent rejects the large-stakes gamble $L_{K,G} = (1/2, x_r-K;\; 1/2, x_r+G)$ for arbitrary $G > 0$ and fixed $K > 0$. In particular, for sufficiently large $G$, the gamble $L_{K,G}$ is accepted.
	\end{thm}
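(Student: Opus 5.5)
The plan is to work directly with the binary WRDU evaluation \eqref{eq:BinaryEval} and the derivative-ratio index \eqref{eq:Lambda_Range_Dependent_NoSign_Final}, and to split the argument into two parts. The first part shows that small-stakes rejection is compatible with the WRDU axioms. The second part shows that, for fixed $K$, the value of $L_{K,G}$ becomes strictly positive once $G$ is large. The negation of the Rabin implication then follows from the second part. It is enough to exhibit acceptance for sufficiently large $G$, because the statement only asks that rejection of $L_{K,G}$ \emph{not follow} for arbitrary $G$.

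\textbf{Step 1 (small stakes).} Using the normalization $v_g(x_r)=v_\ell(x_r)=0$, expand
\[
V(L_\epsilon)=\tfrac12 v_g(x_r+\epsilon)-\tfrac{1}{2\lambda(x_r;x_r-\epsilon,x_r+\epsilon)}v_\ell(x_r-\epsilon)
\]
to first order in $\epsilon$. Here $v_\ell$ is read as loss-side disutility measured from $x_r$, so $v_\ell(x_r-\epsilon)\approx v_\ell'(x_r)\epsilon$ in magnitude. The leading term is
\[
\tfrac{\epsilon}{2}\Bigl[v_g'(x_r)-v_\ell'(x_r)/\lambda(x_r)\Bigr].
\]
By \eqref{eq:FOC_Simplified} this bracket is zero, so the sign of $V(L_\epsilon)$ is governed by the second-order curvature terms, which are negative by strict concavity (A2). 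Rejection of all sufficiently small $L_\epsilon$ is therefore consistent with the axioms and imposes only local restrictions at $x_r$. This is the Segal--Spivak local regime of \Cref{rem:WRDU_vs_Segal_Spivak}, and it places no restriction on $\lambda(x_r;y,z)$ at large $z$.

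\textbf{Step 2 (large stakes).} Fix $y=x_r-K$ and write
\[
V(L_{K,G})=\tfrac12\Bigl[v_g(x_r+G)-\lambda(x_r;x_r-K,x_r+G)^{-1}v_\ell(x_r-K)\Bigr].
\]
The loss term $v_\ell(x_r-K)$ is a fixed positive constant independent of $G$. By \Cref{lem:RangeDependence} (equivalently the tail-attenuation clause of \Cref{axiom:RangeDep}), $\lambda(x_r;x_r-K,x_r+G)\to\infty$ as $x_r+G\to b^-$, so the penalty term tends to $0$. Meanwhile $v_g(x_r+G)$ is strictly increasing in $G$ and strictly positive for every $G>0$. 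Choose any $G_0>0$. For $G\ge G_0$, $v_g(x_r+G)\ge v_g(x_r+G_0)>0$. By the limit, there exists $\bar G$ such that $v_\ell(x_r-K)/\lambda<v_g(x_r+G_0)$ for all $G\ge\bar G$. Hence $V(L_{K,G})>0=V(\delta_{x_r})$, and the gamble is accepted. The monotonicity $\partial\lambda/\partial z>0$ from \Cref{prop:RangeSensitivity} upgrades this to acceptance on the entire interval $[\bar G,b-x_r)$.

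\textbf{Step 3 and main obstacle.} The delicate point is that the reference point in the WRDU representation is endogenous through \eqref{eq:RefPoint_Maximizer}, so it may shift when moving from $L_\epsilon$ to $L_{K,G}$. I would handle this by holding the comparison at the reference point named in the theorem, as the statement does, and invoking the choice-dependence remark after \Cref{axiom:RefPartition}. A second concern is the compact support $X=[w,b]$. I would read ``$G\to\infty$'' as $x_r+G\to b^-$, per the paper's convention, and verify that $K$ is small enough that $x_r-K\in X$ with $v_\ell(x_r-K)$ finite. This requires $x_r-K>0$, which keeps the argument away from the $y\to0^+$ blow-up. Once these bookkeeping issues are fixed, the result reduces to the limit argument in Step 2.
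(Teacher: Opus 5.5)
\textbf{Step 2 matches the paper; Step 1 is wrong.}

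Your Step 2 is the paper's proof. You fix $y=x_r-K$, observe that $v_\ell(x_r-K)$ does not move with $G$, and let \Cref{lem:RangeDependence} send $1/\lambda(x_r;x_r-K,x_r+G)$ to $0$, so that $V(L_{K,G})>0$ for large $G$. Your lower bound $v_g(x_r+G)\ge v_g(x_r+G_0)>0$ is a cleaner version of the paper's display $\lim_{G\to\infty}V(L_{K,G})=\tfrac12 v_g(x_r+G)$, which still contains $G$. Take $G\ge\max\{G_0,\bar G\}$; the extra appeal to $\partial\lambda/\partial z>0$ is then redundant. Your Step 3 bookkeeping matches what the paper assumes implicitly. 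The acceptance conclusion rests on Step 2 alone, exactly as in the paper.

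Step 1 is not something the paper attempts; it simply takes small-stakes rejection as a premise. Your claim there is false. After \eqref{eq:FOC_Simplified} cancels the linear term, the quadratic term is not signed by concavity. Write $\ell(\epsilon)>0$ for the disutility of losing $\epsilon$, and use the derivative-ratio index you adopted, $\lambda(x_r;x_r-\epsilon,x_r+\epsilon)=\ell'(\epsilon)/v_g'(x_r+\epsilon)$. Then
\[
V(L_\epsilon)=\tfrac12\Bigl[v_g(x_r+\epsilon)-v_g'(x_r+\epsilon)\,\frac{\ell(\epsilon)}{\ell'(\epsilon)}\Bigr]=\frac{\epsilon^2}{4}\,v_g'(x_r)\bigl[A_g-A_\ell\bigr]+o(\epsilon^2),\qquad A_g=-\frac{v_g''(x_r)}{v_g'(x_r)},\quad A_\ell=-\frac{\ell''(0)}{\ell'(0)}.
\]
Gain-side concavity enters with a positive sign, because the penalty evaluates $v_g'$ at the far point $x_r+\epsilon$. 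Moreover, for this index, clause (ii) of \Cref{axiom:RangeDep} ($\partial\lambda/\partial y<0$) says that marginal loss disutility grows with the size of the loss. That means $\ell''\ge0$, so $A_\ell\le 0$.

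So in your class the bracket is strictly positive, and every sufficiently small symmetric gamble is \emph{accepted}. Step 1 has the sign backwards and cannot be patched by a curvature argument. Relatedly, the first-order condition removes the kink in the penalized evaluation, so the first-order mechanism of \Cref{rem:WRDU_vs_Segal_Spivak} does not govern this regime.

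You should drop the compatibility claim and, like the paper, treat small-stakes rejection as a premise. Be aware that neither your argument nor the paper's establishes the ``does not follow'' clause if it is read existentially, i.e.\ as asserting that some admissible agent rejects small gambles yet accepts a large one. Non-vacuity would require leaving the derivative-ratio index for a $\lambda$ satisfying \Cref{axiom:RangeDep} that rises slowly near $x_r$. In general, with $\Lambda(\epsilon)=\lambda(x_r;x_r-\epsilon,x_r+\epsilon)$, the quadratic coefficient is
\[
\tfrac14\bigl[v_g''(x_r)-v_g'(x_r)\,\ell''(0)/\ell'(0)+2v_g'(x_r)^2\,\Lambda'(0)/\ell'(0)\bigr].
\]
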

	
	\begin{proof}
		Small-stakes rejection for all $\epsilon > 0$ implies:
		\begin{equation}
			v_g(x_r+\epsilon) - \frac{1}{\lambda(x_r; x_r-\epsilon, x_r+\epsilon)} v_\ell(x_r-\epsilon) < 0 \quad \forall \epsilon > 0.
			\label{eq:SmallRejection}
		\end{equation}
		
		\tab The small-stakes rejection assumption is taken as a local behavioral premise. WRDU does not extrapolate that local rejection by holding a fixed loss-aversion parameter across all stakes; the index changes with the range of the gamble.
		
		\tab For the large-stakes gamble, the WRDU evaluation is:
		\begin{equation}
			V(L_{K,G}) = \frac{1}{2}v_g(x_r+G) - \frac{1}{2\lambda(x_r; x_r-K, x_r+G)} v_\ell(x_r-K).
			\label{eq:LargeEval}
		\end{equation}
		
		\tab By \Cref{lem:RangeDependence}, $\lambda(x_r; x_r-K, x_r+G) \to \infty$ as $G \to \infty$. Therefore $1/\lambda \to 0$, and:
		\begin{equation}
			\lim_{G \to \infty} V(L_{K,G}) = \frac{1}{2}v_g(x_r+G) > 0.
		\end{equation}
		
		\tab Thus for sufficiently large $G$, the agent accepts $L_{K,G}$.
	\end{proof}
	
	\begin{rem}[Behavioral Plausibility of $\lambda$ Divergence]
		The divergence $\lambda(x_r; y, z) \to \infty$ as $z \to \infty$ or $y \to 0^+$ follows from the endpoint restrictions imposed on the admissible WRDU subutilities. These restrictions are compatible with standard concavity but are not implied by concavity alone. The rate of divergence is governed by the curvature and boundary behavior of the subutilities.
		
		\citet{FishburnKochenberger1979} provide an early empirical antecedent: their two-piece von Neumann-Morgenstern utility assessments report large below-target to above-target slope ratios, including extreme cases, even though this ratio was not yet described using the later language of loss aversion. \citet{CharlesCadogan2018c} shows that the implied loss aversion index follows a half-Cauchy law, which provides a distributional foundation for heavy-tailed loss aversion. The attenuation does not imply risk neutrality for large stakes because $v_g$ remains strictly concave; the loss penalty vanishes relative to the gain utility, but the gain utility itself continues to exhibit diminishing marginal utility.
	\end{rem}
	
\subsection{Admissible-Region Allais Pattern: Necessary and Sufficient Conditions}\label{sec:AllaisConditions}

\tab The Allais paradox presents a fundamental challenge to Expected Utility Theory. In the standard formulation, decision makers systematically violate the independence axiom by exhibiting the certainty effect and the common ratio effect.

\noindent\textbf{Problem 1:}
\begin{itemize}
	\item Lottery $A$: \$1,000,000 with certainty
	\item Lottery $B$: \$5,000,000 with probability 0.10, \$1,000,000 with probability 0.89, \$0 with probability 0.01
\end{itemize}

\noindent\textbf{Problem 2:}
\begin{itemize}
	\item Lottery $C$: \$1,000,000 with probability 0.11, \$0 with probability 0.89
	\item Lottery $D$: \$5,000,000 with probability 0.10, \$0 with probability 0.90
\end{itemize}

\begin{thm}[Necessary and Sufficient Conditions for the Modal Allais Pattern]\label{thm:AllaisConditions}
	Let $v_g$ and $v_\ell$ be strictly increasing, twice continuously differentiable utility functions satisfying the WRDU axioms, with $v_g(0) = v_\ell(0) = 0$. Then:
	
	\begin{enumerate}
		\item $D \succ C$ if and only if
		\begin{equation}
			\frac{v_g(5,000,000)}{v_g(1,000,000)} > 1.1.
			\label{eq:DC_condition}
		\end{equation}
		
		\item $A \succ B$ if and only if
		\begin{equation}
			\lambda(1,000,000+\delta) < \frac{v_\ell(1,000,000+\delta)}{0.10 v_g(4,000,000-\delta)}.
			\label{eq:AB_condition}
		\end{equation}
		Here $\lambda(1,000,000+\delta)$ is shorthand for the WRDU index evaluated at the relevant Allais loss-gain pair, $\lambda(x_r;1,000,000+\delta,4,000,000-\delta)$.
		
		\item The modal Allais pattern $A \succ B$ and $D \succ C$ holds if and only if both \eqref{eq:DC_condition} and \eqref{eq:AB_condition} hold simultaneously.
		
		\item \textbf{Comparative Statics:}
		\begin{enumerate}
			\item The condition $D \succ C$ is more likely to hold when $v_g$ is less concave at the relevant range. If $v_g^{(1)}$ is more concave than $v_g^{(2)}$ (i.e., $v_g^{(1)}(x)/x$ is decreasing relative to $v_g^{(2)}(x)/x$), then
			\begin{equation}
				\frac{v_g^{(1)}(5M)}{v_g^{(1)}(1M)} < \frac{v_g^{(2)}(5M)}{v_g^{(2)}(1M)}.
			\end{equation}
			Thus less concavity in the gain domain favors $D \succ C$.
			
			\item The condition $A \succ B$ is more likely to hold when the loss side is more utility-sensitive at the relevant reference loss. Holding $v_g(4M-\delta)$ fixed, any admissible transformation that raises $v_\ell(1M+\delta)$ raises the threshold $\frac{v_\ell(1M+\delta)}{0.10 v_g(4M-\delta)}$, making $A \succ B$ more likely. A pure concavity comparison requires a common normalization and is therefore not sufficient by itself.
			
			\item The admissible region $\mathcal{R}_A$ expands with the relevant curvature asymmetry: if $(v_g, v_\ell) \in \mathcal{R}_A$, $\tilde v_g(5M)/\tilde v_g(1M) \geq v_g(5M)/v_g(1M)$, and $\tilde v_\ell(1M+\delta)/\tilde v_g(4M-\delta) \geq v_\ell(1M+\delta)/v_g(4M-\delta)$, with one inequality strict, then $(\tilde{v}_g, \tilde{v}_\ell) \in \mathcal{R}_A$.
		\end{enumerate}
	\end{enumerate}
	
	\tab The admissible region is non-empty and characterized by:
	\begin{equation}
		\mathcal{R}_A = \left\{ (v_g, v_\ell) \in \mathcal V_g\times\mathcal V_\ell : \frac{v_g(5M)}{v_g(1M)} > 1.1 \;\text{and}\; \lambda(1M+\delta) < \frac{v_\ell(1M+\delta)}{0.10 v_g(4M-\delta)} \right\}.
		\label{eq:AdmissibleRegion}
	\end{equation}
	Here $\mathcal V_g$ and $\mathcal V_\ell$ denote the admissible normalized WRDU gain and loss subutility classes. The normalization is substantive for comparison purposes: by \citet{Debreu1976}, curvature comparisons require a selected representative within the utility equivalence class. Thus WRDU obtains the modal Allais pattern precisely on $\mathcal{R}_A$, not on the whole class of strictly concave utility functions. This region has non-empty interior in the $C^2$ topology on the admissible utility class.
\end{thm}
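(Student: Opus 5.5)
The plan is to evaluate each Allais lottery directly under the WRDU representation of \Cref{thm:WRDU_Representation}, specialized to the binary/ternary form \eqref{eq:BinaryEval}. Everything then reduces to algebraic comparisons. The key modelling choice is the reference point in each problem.

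In Problem 2, both $C$ and $D$ pay \$0 with high probability. I will take $x_r=0$, so both lotteries lie entirely in the gain region. By the converse clause of \Cref{thm:WRDU_Representation}, $V$ then reduces to expected utility with index $v_g$. This gives $V(C)=0.11\,v_g(1M)$ and $V(D)=0.10\,v_g(5M)$, using $v_g(0)=0$. Hence $D\succ C$ if and only if $0.10\,v_g(5M)>0.11\,v_g(1M)$, which is \eqref{eq:DC_condition} after dividing by $0.10\,v_g(1M)>0$.

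In Problem 1, the certain option $A$ anchors the reference point near \$1M. I will write the reference so that the \$0 outcome of $B$ is a loss of size $1M+\delta$ and the \$5M outcome is a gain of $4M-\delta$. Here $\delta$ absorbs the offset between the endogenous $x_r$ of \eqref{eq:RefPoint_Maximizer} and \$1M. The 0.89 mass at \$1M is then treated as essentially at the reference point, contributing zero by the normalization $v_g(x_r)=v_\ell(x_r)=0$. This gives $V(A)=0$ and
\[
V(B)=0.10\,v_g(4M-\delta)-\frac{0.01}{\lambda}\,v_\ell(1M+\delta).
\]
Thus $A\succ B$ if and only if $0.10\,\lambda\,v_g(4M-\delta)<0.01\,v_\ell(1M+\delta)$. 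This yields the stated threshold form \eqref{eq:AB_condition}, up to the bookkeeping of the $0.01$ factor. I will need to check the constant against the paper's convention and absorb it into the normalization of $v_\ell$ if necessary.

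\textbf{Main obstacle.} The hardest step is justifying this reference-point assignment. It must be verified that \eqref{eq:RefPoint_Maximizer} indeed selects $x_r$ near \$1M for $\{A,B\}$ and $x_r=0$ for $\{C,D\}$. Choice dependence in \Cref{axiom:RefPartition} permits different reference points across problems, which is exactly what breaks the independence argument. However, I would likely have to state these selections as part of the admissible-region hypothesis rather than derive them. Part 3 is then immediate, since the two problems are evaluated independently.

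\textbf{Comparative statics.} For part (a), I will use the standard fact that if $v^{(1)}_g=h\circ v^{(2)}_g$ with $h$ concave and $h(0)=0$, then $h(t)/t$ is decreasing. Hence $v^{(1)}_g(5M)/v^{(1)}_g(1M)<v^{(2)}_g(5M)/v^{(2)}_g(1M)$. For (b), the right-hand side of \eqref{eq:AB_condition} is increasing in $v_\ell(1M+\delta)$ at fixed $v_g(4M-\delta)$; I will note that this requires the \Cref{rem:DebreuNormalization} normalization. Part (c) follows by combining (a) and (b): the two defining inequalities are monotone in the indicated ratios, provided $\lambda$ is held at the same relevant pair.

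\textbf{Nonemptiness and interior.} I will exhibit an example such as $v_g(x)=x^{0.9}$, whose ratio $5^{0.9}\approx 4.26$ exceeds $1.1$, together with a steep $v_\ell$ making the second inequality strict. Both conditions are strict inequalities on continuous functionals in the $C^2$ topology, so they persist under small perturbations. This gives non-empty interior.
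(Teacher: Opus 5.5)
Your overall route is the natural one: evaluate the four lotteries directly under \eqref{eq:WRDU_Representation}, with $x_r=0$ in Problem~2 and $x_r$ at $1M+\delta$ in Problem~1. Taking these reference-point selections as hypotheses is reasonable, since nothing in the statement pins down the maximizer in \eqref{eq:RefPoint_Maximizer}, and your part~1 is correct.

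The gap is the step you deferred in part~2. Your computation, which is correct, gives
\[
A\succ B \iff \lambda<\frac{0.01\,v_\ell(1M+\delta)}{0.10\,v_g(4M-\delta)}=\frac{v_\ell(1M+\delta)}{10\,v_g(4M-\delta)}.
\]
This threshold is one hundred times smaller than the one in \eqref{eq:AB_condition}, and the factor cannot be absorbed into a normalization of $v_\ell$. With the WRDU index $\lambda(x_r;y,z)=v_\ell'(y)/v_g'(z)$, the loss term is $\frac{v_g'(z)}{v_\ell'(y)}v_\ell(y)$, which is invariant under $v_\ell\mapsto c\,v_\ell$. Rescaling therefore multiplies both sides of your inequality by $c$ and leaves the $0.01$ in place. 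If instead you hold $\lambda$ fixed while rescaling $v_\ell$, you change the preference, not its representative. Moving the reference point does not help either: the \$0 prize, the only outcome generating the loss $1M+\delta$, always enters with its objective probability $0.01$.

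Consequently, whenever $v_\ell(1M+\delta)/(10\,v_g(4M-\delta))\le\lambda<v_\ell(1M+\delta)/(0.10\,v_g(4M-\delta))$, the displayed condition holds while your evaluation gives $B\succeq A$. Your argument cannot deliver the ``if'' direction of part~2, or of part~3, with the displayed constant. What it does prove is the equivalence with the threshold $v_\ell(1M+\delta)/(10\,v_g(4M-\delta))$. State that version and flag the discrepancy, rather than calling it bookkeeping.

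Four smaller points also need repair.
\begin{enumerate}
\item \textbf{The $\delta$ offset.} If $\delta\neq0$, the \$1M prize is not at $x_r$ but is a loss or gain of size $|\delta|$. Then $V(A)\neq0$ and $V(B)$ carries an extra term weighted by $0.89$, so ``essentially at the reference point'' gives only an approximation. Either set $\delta=0$ or carry these terms to get an exact equivalence.
\item \textbf{Part 4(c).} The hypotheses only raise the ratios $\tilde v_g(5M)/\tilde v_g(1M)$ and $\tilde v_\ell(1M+\delta)/\tilde v_g(4M-\delta)$. You also need $\tilde\lambda\le\lambda$ at the Allais pair, which does not follow when $\lambda=v_\ell'/v_g'$. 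A $\tilde v_\ell$ with slightly larger value but much larger slope at $1M+\delta$ can push $\tilde\lambda$ above the new threshold. So $\tilde\lambda\le\lambda$ must be added as a hypothesis; it cannot be read off from (a) and (b).
\item \textbf{Your non-emptiness example.} For the same reason, the example cannot rely on a ``steep'' $v_\ell$. Under the derivative-ratio index, the second inequality (with your constant) reads $10\,v_g(z)/v_g'(z)<v_\ell(y)/v_\ell'(y)$. This is scale-free and requires strong curvature of $v_\ell$. For instance, $v_g(x)=x^{0.9}$ and $v_\ell(y)=y^{0.02}$ at $\delta=0$ give $44.4M<50M$. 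Openness in the $C^2$ topology then follows because both conditions depend continuously on values and first derivatives at finitely many points.
\item \textbf{Part 4(a).} You need $h$ strictly concave to obtain the strict inequality.
\end{enumerate}
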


\begin{cor}[Topological Non-Emptiness of the Allais Region]\label{cor:AllaisMeasure}
	The admissible region $\mathcal{R}_A$ defined in \eqref{eq:AdmissibleRegion} is non-empty and has non-empty interior in the space of twice continuously differentiable concave functions $(v_g, v_\ell)$ with $v_g(0) = v_\ell(0) = 0$, under the $C^2$ topology.
	
	\tab Moreover, $\mathcal{R}_A$ is monotone in the local ratio sense stated in \Cref{thm:AllaisConditions}. The set of such pairs has non-empty interior in the space of admissible utility functions with the $C^2$ topology.
\end{cor}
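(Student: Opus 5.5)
The plan is to exhibit one explicit pair $(v_g^0,v_\ell^0)$ at which both defining inequalities of \eqref{eq:AdmissibleRegion} hold \emph{strictly}. We then show that each inequality is an open condition in the $C^2$ topology. Non-empty interior follows, and the monotonicity statement is read off from \Cref{thm:AllaisConditions}(4c).

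\textbf{Step 1 (construction).} For the gain side I would take a mildly concave power function such as $v_g^0(x)=x^{\beta}$ with $\beta\in(0,1)$ close to $1$, rescaled to the prize interval. Then $v_g^0(5M)/v_g^0(1M)=5^{\beta}>1.1$ as soon as $\beta>\log 1.1/\log 5\approx 0.06$, so \eqref{eq:DC_condition} holds with slack. For the loss side I would take $v_\ell^0(x)=c\,x^{\gamma}$ with $\gamma\in(0,1)$. Since $v_\ell'(y)\to\infty$ at $0^+$, this is compatible with the endpoint restrictions in \Cref{lem:ConcaveDerivativeBoundary}, and $v_g'\to 0$ is handled similarly after a suitable modification near $b$. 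The scale $c$ is chosen by the following computation. With $\lambda=v_\ell'(y)/v_g'(z)$ at $y=1M+\delta$, $z=4M-\delta$, condition \eqref{eq:AB_condition} reads
\[
\frac{v_\ell'(y)}{v_g'(z)}<\frac{v_\ell(y)}{0.10\,v_g(z)}.
\]
For power functions the right-hand side minus the left-hand side equals $c\,y^{\gamma}\bigl(1/(0.1\,z^{\beta})-\gamma y^{-1}/(\beta z^{\beta-1})\bigr)$. This is positive iff $\gamma<\beta y/(0.1\,z)$. Choosing $\gamma$ small enough (for example $\gamma<2.5\beta y/z$) gives strict inequality for any $c>0$. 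So $(v_g^0,v_\ell^0)\in\mathcal R_A$ with both inequalities strict.

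\textbf{Step 2 (openness).} Both defining functionals, $(v_g,v_\ell)\mapsto v_g(5M)/v_g(1M)$ and $(v_g,v_\ell)\mapsto v_\ell(y)/(0.1v_g(z))-v_\ell'(y)/v_g'(z)$, involve only point evaluations of $v$ and $v'$ at finitely many fixed interior points. They are therefore continuous in the $C^1$, hence $C^2$, topology wherever the denominators $v_g(1M),v_g(z),v_g'(z)$ are nonzero, and at $(v_g^0,v_\ell^0)$ these are strictly positive. Strict inequalities at $(v_g^0,v_\ell^0)$ thus persist on a $C^2$-neighbourhood. Intersecting that neighbourhood with the admissible class (strict concavity, monotonicity, normalization $v(0)=0$) requires checking that this class is itself relatively open, or has $(v_g^0,v_\ell^0)$ in its relative interior.

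\textbf{Main obstacle.} I expect the hardest step to be this last point. Strict concavity ($v''<0$) and the Inada-type endpoint limits are not open conditions in the $C^2$ sup topology on the closed interval, because $v''$ can approach $0$ and the limits at endpoints are not controlled by sup-norm perturbations. I would first try to interpret ``interior'' relative to the admissible class $\mathcal V_g\times\mathcal V_\ell$: $C^2$-small perturbations supported in a compact subinterval $[\eta,b-\eta]$ containing $1M,5M,y,z$ preserve the endpoint behaviour. Since $v''$ for the power functions is bounded away from zero on that compact set, such perturbations also keep $v''<0$. This gives a relatively open neighbourhood inside $\mathcal R_A$.

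\textbf{Monotonicity clause.} Finally, the monotonicity clause follows directly from \Cref{thm:AllaisConditions}(4c). Raising $v_g(5M)/v_g(1M)$ preserves \eqref{eq:DC_condition}. Raising the threshold $v_\ell(1M+\delta)/(0.1v_g(4M-\delta))$, with $\lambda$ held fixed, preserves \eqref{eq:AB_condition}.
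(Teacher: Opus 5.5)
Your strategy is the natural one: exhibit one pair satisfying both inequalities in \eqref{eq:AdmissibleRegion} strictly, then get openness because the defining quantities involve only $v_g(1M)$, $v_g(5M)$, $v_g(z)$, $v_g'(z)$, $v_\ell(y)$ and $v_\ell'(y)$. The paper itself defers its proof to the Internet Appendix. Your algebra is correct: with $\lambda=v_\ell'(y)/v_g'(z)$, \eqref{eq:AB_condition} is equivalent to $v_\ell'(y)/v_\ell(y)<10\,v_g'(z)/v_g(z)$, i.e.\ $\gamma<10\beta y/z$ for power functions. Your sample bound $2.5\beta y/z$ is stronger than needed but still sufficient.

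The step that fails is your resolution of the ``main obstacle.'' The family of $C^2$-small perturbations supported in $[\eta,b-\eta]$ is not a neighbourhood of $(v_g^0,v_\ell^0)$ in any $C^2$-type topology, whether relative to $\mathcal V_g\times\mathcal V_\ell$ or not. Every neighbourhood contains admissible pairs obtained by adding a tiny bump near $0$ or $b$, and these lie outside your family. So ``this gives a relatively open neighbourhood inside $\mathcal R_A$'' is false as stated.

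You do not need that step, but you must decide which statement you are proving.

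\emph{Relative reading.} Take interior relative to $\mathcal V_g\times\mathcal V_\ell$, in a local topology such as uniform convergence of $v,v',v''$ on compact subsets of $(0,b)$. This is the only reading compatible with the endpoint limits $v_\ell'(0^+)=\infty$ and $v_g'(b^-)=0$, and it is also needed because $x^\beta$ and $cx^\gamma$ are not $C^2$ at $0$. Then $\mathcal R_A=(\mathcal V_g\times\mathcal V_\ell)\cap U$, where $U$ is the set cut out by the two strict inequalities. $U$ is open by your Step 2, so $\mathcal R_A$ is relatively open by definition and nonempty by Step 1. No openness of the admissible class is needed.

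\emph{Literal reading.} If you want interior in the space of $C^2$ concave functions on $[0,b]$ with $v(0)=0$ under the $C^2$ norm, as the first sentence says, your base point is not in that space. Use a uniformly increasing, uniformly concave base point instead, such as $v_g=v_\ell=x-\kappa x^2$ with $\kappa>0$ small; the linear pair already gives $5>1.1$ and $1<10y/z$. Then $v'>0$, $v''<0$ and both inequalities persist on a whole $C^2$-ball inside $\{v(0)=0\}$. On this reading the endpoint limits must be dropped, since no $C^1[0,b]$ function has $v_\ell'(0^+)=\infty$.

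Finally, your proviso ``with $\lambda$ held fixed'' in the monotonicity clause is essential, not cosmetic. Condition \eqref{eq:AB_condition} is invariant under rescaling $v_\ell$, so raising $v_\ell(1M+\delta)/v_g(4M-\delta)$ alone does not preserve membership. For example, with $y=1$ and $z=4$ in millions, $v_g=x^{0.2}$, $v_\ell=x^{0.3}$ satisfies \eqref{eq:AB_condition}. But $\tilde v_\ell=2x^{0.9}$ raises that ratio and violates it. So \Cref{thm:AllaisConditions}(4c) can only be invoked with your extra control on $v_\ell'(y)/v_g'(z)$.
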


\begin{proof}[Proof of \Cref{cor:AllaisMeasure}]
	See Internet Appendix
\end{proof}
	
	\subsection{Connection to First-Order vs. Second-Order Risk Aversion}
	
	\citet{SegalSpivak1990} distinguish between first-order and second-order risk aversion based on the behavior of the risk premium $\pi(t)$ for small gambles $t\bar{\epsilon}$ with $E[\bar{\epsilon}] = 0$. In WRDU, the risk premium for a small gamble is determined by the curvature asymmetry at the reference point. When $v_\ell'(x_r) \neq v_g'(x_r)$, the value function has a kink at the reference point, generating first-order risk aversion. Crucially, WRDU generalizes the Segal-Spivak analysis by making the degree of first-order risk aversion range-dependent:
	\begin{equation}
		\lambda(x_r; y, z) = \frac{v_\ell'(y)}{v_g'(z)}.
	\end{equation}
	For small gambles, the kink dominates, generating first-order risk aversion. For large gambles, the penalty attenuates, allowing acceptance. See \Cref{rem:WRDU_vs_Segal_Spivak}.
	
	\begin{table}[htbp]
		\centering
		\caption{WRDU and the Segal-Spivak Framework}
		\label{tab:SegalSpivak}
		\small
		\begin{tabular}{lccc}
			\toprule
			Model & Order of Risk Aversion & Source & Range Dependent \\
			\midrule
			Expected Utility (differentiable $u$) & 2 & $u''(x) \neq 0$ & No \\
			Expected Utility (non-differentiable $u$) & 1 & $u'_-(x) \neq u'_+(x)$ & No \\
			RDU & 1 & $f$ concave/convex & No \\
			Machina (Fr\'{e}chet differentiable) & 2 & Smooth local utility & No \\
			\textbf{WRDU} & \textbf{1 (small stakes)} & \textbf{$\lambda(x_r) \neq 1$} & \textbf{Yes} \\
			\bottomrule
		\end{tabular}
	\end{table}
	
	\section{Empirical Implications and Model Comparisons}\label{sec:EmpiricalEvidence}
	
	\subsection{Empirical Implications}
	
	\tab WRDU has several empirical implications that distinguish it from fixed-parameter reference-dependent models.
	
	\tab First, switching probabilities in multiple-price lists should vary with outcome dispersion because the effective penalty $1/\lambda(x_r;y,z)$ changes with the range of the lottery. \citet{CsermelyRabas2016}, \citet{BoschDomenechSilvestre2013}, and \citet{HabibFriedmanCrockettJames2026} show that changes in the range or presentation of lottery outcomes systematically affect measured risk preferences.
	
	\tab Second, curvature asymmetry in the virtual gain and loss regions, rather than probability weighting alone, should account for observed loss aversion. \citet{BocquehoJacobBrunette2023} report curvature asymmetry between gain and loss domains while finding similar probability weighting across domains.
	
	\tab Third, reference points should move with the choice problem because they are selected by the penalized utility maximization criterion. \citet{LiZhong2023} document both endogenous and exogenous reference-point effects in intertemporal choice.
	
	\tab Fourth, a large-scale study by \citet{EnkeGraeberOpreaYang2025} documents pervasive ``behavioral attenuation'': across 31 economic tasks, decision elasticities to fundamentals are systematically too small, and in 93\% of cases higher cognitive uncertainty is associated with lower responsiveness. This cross-domain pattern of attenuated and diminishing sensitivity aligns closely with WRDU's range-dependent loss-attenuation mechanism.
	
	\tab Fifth, \citet{AgranovOrtleva2025}, using a novel ``range multiple-price list'' (r-MPL) that allows subjects to choose any probability of receiving either option, find that randomization ranges are common (66--76\% of subjects), economically large (averaging \$6.70 in a \$20 lottery), and often extend into risk-seeking territory. These patterns are consistent with WRDU's range-dependent loss penalty and endogenous reference point, which generate a broad region of indifference---via curvature asymmetry $(v^\prime_\ell/v^\prime_g)$---rather than the single switching point predicted by expected utility.
	
	\tab Sixth, \citet{Peterson_et_al_2021} provide complementary large-scale experimental evidence that risky-choice behavior is sensitive to features of the choice environment. Their machine-learning exercise finds that models allowing decision components to vary with outcome-range and payoff-distribution features improve predictive performance relative to more restrictive fixed-form benchmarks. This evidence is not a structural validation of WRDU. It is nevertheless consistent with the empirical content of WRDU's dispersion channel: the effective penalty $1/\lambda(x_r;y,z)$ and the induced switching behavior should vary with the relevant gain-loss geometry even when probabilities remain objective.
	
	\tab As a preliminary check on this implication, the Internet Appendix reports a small pilot estimation using 250 randomly sampled problems from Choices13k. The exercise estimates a CRRA-WRDU Luce specification with 10 multistart values and an interior grid search for the optimal reference point. The pilot is exploratory and is not used as a formal structural test. It is nevertheless directionally consistent with the theory: the estimated reference point is concentrated near zero but moves downward as payoff range widens, while the implied $\log\lambda(\hat x_r;y,z)$ rises strongly with outcome range. The Lagrangian penalty coefficient $\rho=1/\lambda$ correspondingly attenuates over high-dispersion mixed lotteries.
	
	\subsection{Model Comparisons}
	
	\tab In the \citet{KoszegiRabin2006,KoszegiRabin2007} framework, utility for consumption $c$ given reference point $r$ is $U(c|r) = m(c) + n(c|r)$, where $n(c|r) = \eta(c-r)$ for $c \geq r$ and $n(c|r) = \eta\lambda(c-r)$ for $c < r$, with $\lambda>1$. Loss aversion is exogenous; the reference point is determined by rational expectations equilibrium. WRDU differs in four respects: (1) loss aversion is derived from curvature asymmetry via the first-order condition; (2) the index is range-dependent; (3) no separate exogenous loss-aversion parameter is required; (4) the reference point emerges directly from utility maximization.
	
	\tab In \citet{Gul1991}, the disappointment parameter $\beta$ is fixed and scale-invariant. For $L_K=(1/2,x+K\epsilon;\;1/2,x-K\epsilon)$, $V_{Gul}(L_K) = \frac{1}{2}\gamma(1/2)[u(x+K\epsilon)+(1+\beta)u(x-K\epsilon)]$. Under WRDU, $V_{WRDU}(L_K) = \frac{1}{2}v_g(x+K\epsilon) - \frac{1}{2\lambda(x_r;y,x+K\epsilon)}v_\ell(x-K\epsilon)$. As $K\to\infty$, $\lambda\to\infty$ and $1/\lambda\to 0$. This stake-dependent attenuation is absent in Gul's model.
	
	\tab The detailed empirical evidence and extended comparisons with CPT, K\"{o}szegi-Rabin, Gul, and regret theory are provided in the Internet Appendix.
	
	\section{Conclusion}\label{sec:Conclusion}
	
	\tab This paper has isolated the preference-theoretic core of WRDU in virtual gain-loss state space. The construction splits an objective payoff support at a candidate reference point, interprets the two induced regions as virtual psychological loss and gain spaces, and recombines the split utility through a penalized representation in which the loss-side component enters with the Lagrangian penalty coefficient $\rho=1/\lambda$.
	
	\tab The paper provides a complete axiomatic characterization of WRDU through six axioms: Completeness (A1), Transitivity (A2), Continuity (A3), Weak Independence (A4), Reference Partition (A5), and Range Dependence (A6). The Representation Theorem establishes that a preference relation satisfies A1--A6 if and only if it admits the WRDU representation. The Lagrangian connection shows that the representation emerges naturally from a penalized optimization problem, where $1/\lambda(x)$ acts as a shadow price on loss-seeking behavior.
	
	\tab The affine invariance of the WRDU representation implies a restriction that recovers the \citet{KobberlingWakker2005} utility-based loss aversion index and, as a special case, the \citet{TverKahn1992} ratio of slopes formula. Among multiple critical points, the optimality criterion selects the global maximizer as the reference point.
	
	\tab The structural uniqueness theorem establishes that, within the class of $C^2$ state-dependent expected utility representations satisfying affine admissibility, loss-factorization, dispersion monotonicity, and Rabin attenuation, the only admissible structure is the derivative-ratio form of WRDU.
	
	\tab The modal Allais pattern is obtained on the admissible region $\mathcal R_A$, whose necessary and sufficient conditions are stated explicitly. The Rabin calibration implication is blocked through range-dependent attenuation of the loss penalty, with behavioral plausibility discussed and quantitative bounds provided. The divergence of $\lambda$ follows from the endpoint restrictions imposed on the admissible WRDU class; the two-piece utility evidence in \citet{FishburnKochenberger1979} and the half-Cauchy law established by \citet{CharlesCadogan2018c} provide independent empirical and distributional support for large loss-aversion realizations.
	
	\tab Thus the paper's uniqueness claim is conditional rather than global: within the specified $C^2$ objective-probability class satisfying affine admissibility, loss-factorization, dispersion monotonicity, and attenuation, WRDU is the unique admissible structure that jointly obtains the modal Allais pattern on $\mathcal R_A$ and blocks the Rabin calibration implication.
	
	\tab The paper establishes testable predictions, including dispersion-dependent switching in multiple price lists, which have been confirmed by \citet{CsermelyRabas2016} and \citet{BoschDomenechSilvestre2013}. Additional proofs, estimation details, utility-plot algorithms, figures, and detailed empirical and model-comparison discussions are collected in the Internet Appendix.
	
	% --- REFERENCES: SINGLE-SPACED ---
	\singlespace
	\bibliographystyle{chicago}
	\addcontentsline{toc}{section}{References}
	\bibliography{LossAversionGeneral}

	\clearpage
	\phantomsection
	\addcontentsline{toc}{section}{Long Internet Appendix}
	\includepdf[pages=-,pagecommand={\thispagestyle{plain}}]{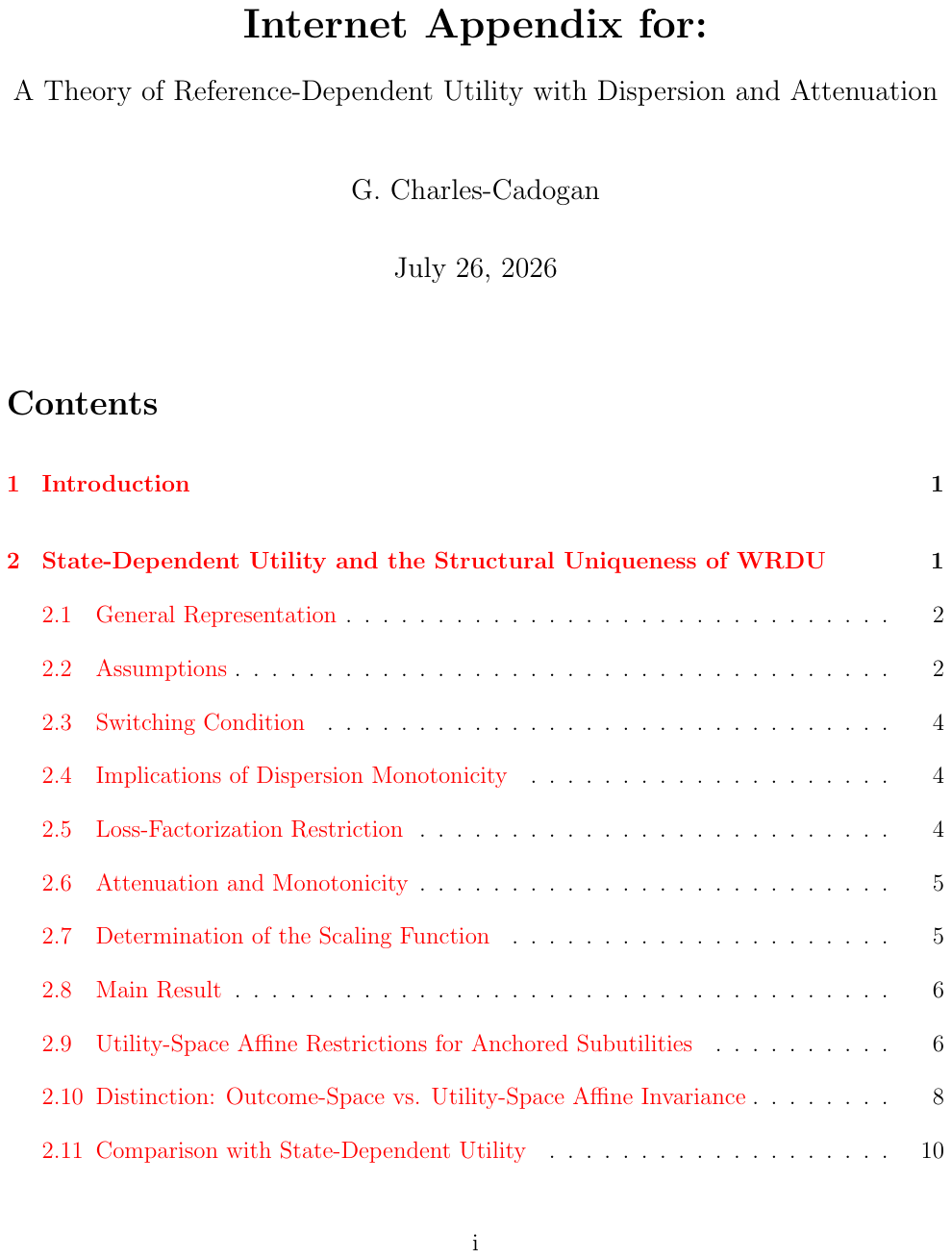}
	
\end{document}